\renewcommand{\epsilon}{\varepsilon}
\renewcommand{\phi}{\varphi}
\newtheorem*{rep@theorem}{\rep@title}
\newcommand{\newreptheorem}[2]{
\newenvironment{rep#1}[1]{
 \def\rep@title{#2 \ref{##1}}
 \begin{rep@theorem}}
 {\end{rep@theorem}}}
\newcommand*\pFq[6][8]{
  \begingroup
  \pFqmuskip=#1mu\relax
  \mathchardef\normalcomma=\mathcode`,
  \mathcode`\,=\string"8000
  \begingroup\lccode`\~=`\,
  \lowercase{\endgroup\let~}\pFqcomma
  {}_{#2}F_{#3}{\left[\genfrac..{0pt}{}{#4}{#5};#6\right]}
  \endgroup
}
\newcommand{\pFqcomma}{{\normalcomma}\mskip\pFqmuskip}
\newtheorem{thm}{Theorem}[section]
\newtheorem*{thm*}{Theorem}
\newtheorem{lem}[thm]{Lemma}
\newtheorem{cor}[thm]{Corollary}
\newtheorem{ex}[thm]{Example}
\newtheorem{defi}[thm]{Definition}
\newtheorem{prop}[thm]{Proposition}
\newtheorem{remark}[thm]{Remark}
\begin{document}

\author{Andreas Bluhm}
\email{bluhm@math.ku.dk}
\address{QMATH, Department of Mathematical Sciences, University of Copenhagen, Universitetsparken 5, 2100 Copenhagen, Denmark}

\author{Ion Nechita}
\email{ion.nechita@univ-tlse3.fr}
\address{Laboratoire de Physique Th\'eorique, Universit\'e de Toulouse, CNRS, UPS, France}

\title{A tensor norm approach to quantum compatibility}

\begin{abstract}
Measurement incompatibility is one of the most striking examples of how quantum physics is different from classical physics. Two measurements are incompatible if they cannot arise via classical post-processing from a third one. A natural way to quantify incompatibility is in terms of noise robustness. In the present article, we review recent results on the maximal noise robustness of incompatible measurements, which have been obtained by the present authors using free spectrahedra, and rederive them using tensor norms. In this way, we make them accessible to a broader audience from quantum information theory and mathematical physics and contribute to the fruitful interactions between Banach space theory and quantum information theory. We also describe incompatibility witnesses using tensor norm and matrix convex set duality, emphasizing the relation between the different notions of witnesses. 
\end{abstract}

\date{\today}

\maketitle

\setcounter{tocdepth}{1}
\tableofcontents

\section{Introduction}
The existence of incompatible observables is one of the most striking ways in which quantum mechanics differs from classical physics. Observables are incompatible if they cannot be measured at the same time \cite{Heisenberg1927, Bohr1928}. Arguably the best example of such observables are position and momentum, but interesting situations also occur in finite dimensions \cite{Heinosaari2016,guhne2021incompatible}. More precisely, quantum mechanics allows the existence of incompatible measurements, i.e., measurements which do not arise as marginals from a joint measurement. Equivalently, measurements are incompatible if there does not exist another measurement from which all of the outcomes can be obtained via classical post-processing \cite{Heinosaari2016}. The existence of incompatible measurements is in fact indispensable for the violation of Bell inequalities \cite{Fine1982} and has therefore practical implications for information processing \cite{Brunner2014}. Incompatibility thus plays a similar role to entanglement inasmuch as it can be seen as a resource for quantum information tasks  \cite{Heinosaari2015}. 

In the same way as for entanglement, measurement incompatibility will vanish given enough noise  \cite{busch2013comparing}. Hence, it is desirable to quantify the maximal noise robustness of incompatibility for a given set of physical parameters such as the dimension of the quantum system, as it will tell us how much noise we can tolerate before any possible advantage a quantum device might have due to the use of incompatible measurements will disappear. This question has been investigated in \cite{busch2013comparing, Gudder2013} and recently also in  \cite{bluhm2018joint, bluhm2020compatibility, Bluhm2020GPT}, where a connection to non-commutative convex geometry has been established. 

This work is in part a review of the latter line of work. However, in the spirit of \cite{Bluhm2020GPT}, we will derive our results using tensor norms of certain Banach spaces instead of relying on free spectrahedra as \cite{bluhm2018joint, bluhm2020compatibility}. Tensor norms have recently also been successfully used to characterize quantum phenomena such as entanglement \cite{jivulescu2020multipartite} and quantum steering \cite{jencova2022assemblages}. Unlike \cite{Bluhm2020GPT}, we will focus on quantum mechanics instead of working in the broader framework of general probabilistic theories. While most of the results of this review are already present in \cite{bluhm2018joint, bluhm2020compatibility, Bluhm2020GPT}, their explicit derivation in terms of tensor norms and convex optimization problems is novel. In this way, the present work strengthens the fruitful interaction between Banach space theory and quantum information theory. Moreover, it will make the results of \cite{bluhm2018joint, bluhm2020compatibility} accessible to a broader audience from quantum information theory and mathematical physics since it eliminates the need for background knowledge regarding free spectrahedra, making use of more familiar objects such as norms.

After an exposition of the necessary preliminaries concerning measurement incompatibility, tensor norms, and matrix convex sets in Section \ref{sec:prelminaries}, we make the first connection between measurement compatibility and tensor norms in Section \ref{sec:compatibility-norms}. Then, we go on to study incompatibility witnesses and make a second connection to tensor norms in Section \ref{sec:witness-norms}. We close with a discussion and an outlook in Section \ref{sec:final-section}.

\section{Preliminaries} \label{sec:prelminaries}

We gather in this section the basic notation used in the paper, as well as the main definitions and results regarding incompatibility in quantum mechanics, (tensor) norms, and matrix convex sets.

\subsection{Notation}
Let $n \in \mathbb N$. For simplicity, we write $[n]:=\{1, \ldots, n\}$. By $\{e_j\}_{j \in [g]}$, we denote the standard basis of $\mathbb R^g$ for some $g\in \mathbb N$. We write $\mathcal M(\mathbb C)_n$ for the $n \times n$ matrices with complex entries and $\mathcal M(\mathbb C)_n^{\mathrm{sa}}$ when restricting to Hermitian elements. $I_d$ will be the identity matrix in $d$-dimensions, where we will drop the $d$ when no confusion can arise. We will write
\begin{equation*}
    \mathcal S(\mathbb C^d) := \{\rho \in \mathcal M(\mathbb C)_d^{\mathrm{sa}}:~\rho\geq 0,~ \mathrm{Tr}[\rho] = 1\}
\end{equation*}for the set of density matrices. Let $v \in \mathbb R^g$. Then, we write for $p \in \mathbb N$
\begin{equation*}
    \|v\|_p = \left(\sum_{i = 1}^g |v_i|^p\right)^{\frac{1}{p}}, \qquad \|v\|_\infty = \max_{i \in [g]} |v_i|,
\end{equation*}
and $\ell_p^g$ for the corresponding Banach space $(\mathbb R^g, \|\cdot\|_p)$. We will write $B_{\ell^g_p}$ for the respective unit balls of this norm. For $M \in \mathcal M(\mathbb C)_n$, we write for the Schatten p-norms
\begin{equation*}
    \|M\|_p = \mathrm{Tr}[|M|^p]^{\frac{1}{p}}, \qquad \|M\|_\infty = \max_{\|v\|_2 = 1} \|Mv\|_2,
\end{equation*}
and $S_p^d$ for the (real) Banach space $(\mathcal M(\mathbb C)^{\mathrm{sa}}_n, \|\cdot\|_p)$. Finally, we will write $\lambda_{\max}(M)$ for the largest eigenvalue of $M$.

\subsection{Measurement incompatibility}
In this section, we will give a short introduction to measurement incompatibility. For an introduction to the mathematics of quantum mechanics, see e.g.\ \cite{Heinosaari2011} or \cite{watrous2018theory}. Quantum mechanical measurements are described using effect operators, i.e.\
\begin{equation*}
\mathrm{Eff}_d := \Set{E \in \mathcal M(\mathbb C)_d^{\mathrm{sa}}: 0 \leq E \leq I}.
\end{equation*} 
A measurement then corresponds to a \emph{positive operator valued measure} (POVM). Let $\Sigma$ be the set of measurement outcomes, which we assume to be finite for simplicity. The corresponding POVM is then a set of effects $\Set{E_j}_{j \in \Sigma}$, $E_j \in \mathrm{Eff}_d$ for all $j \in \Sigma$, such that
\begin{equation*}
\sum_{j \in \Sigma} E_j = I_d.
\end{equation*}
Since the actual measurement outcomes are not important for us, we will write $\Sigma = [k]$ for some $k \in \mathbb N$.

Now we can define the notion of \emph{joint measurability} or \emph{compatibility} of measurements. A collection of POVMs is compatible if they arise as marginals from a joint POVM (see \cite{Heinosaari2016} for an introduction). 
\begin{defi}[Jointly measurable POVMs] \label{def:jointPOVM}
Let $\Set{E_j^{(i)}}_{j \in [k_i]}$ be a collection of $d$-dimensional POVMs, where $k_i \in \mathbb N$ for all $i \in [g]$, $d$, $g \in \mathbb N$. The POVMs are \emph{compatible} if there is a $d$-dimensional joint POVM $\Set{R_{j_1, \ldots, j_g}}$ with $j_i \in [k_i]$ such that for all $u \in [g]$ and $v \in [k_u]$,
\begin{equation*}
E_v^{(u)} = \sum_{\substack{j_i \in [k_i] \\ i \in [g] \setminus \Set{u}}} R_{j_1, \ldots, j_{u-1},v,j_{u+1}, \ldots j_g}.
\end{equation*}
\end{defi}

There is an equivalent definition of joint measurability \cite[Equation 16]{Heinosaari2016}, formulated in terms of classical post-processing, which admits a more straightforward interpretation. Measurements are compatible if and only if they arise through classical post-processing of the outcomes of a single common measurement. 
\begin{lem} \label{lem:post-processing}
Let $E^{(i)} \in (\mathcal M(\mathbb C)_d^{\mathrm{sa}})^{k_i}$, $i \in [g]$, be a collection of POVMs. These POVMs are jointly measurable if and only if there is some $m \in \mathbb N$ and a POVM $M \in (\mathcal M(\mathbb C)_d^{\mathrm{sa}})^{m}$ such that
\begin{equation*}
E^{(i)}_{j} = \sum_{x=1}^m p_i(j|x)M_x
\end{equation*}
for all $j \in [k_i]$, $i \in [g]$ and some conditional probabilities $p_i(j|x)$.
\end{lem}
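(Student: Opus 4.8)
The plan is to prove the two implications separately, in each case building the required object explicitly from the other.

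For the ``only if'' direction, suppose the POVMs $E^{(i)}$ are jointly measurable in the sense of Definition \ref{def:jointPOVM}, witnessed by a joint POVM $\Set{R_{j_1,\ldots,j_g}}$. I would take $m := \prod_{i=1}^g k_i$ and relabel the joint outcomes by a single index $x \leftrightarrow (j_1,\ldots,j_g)$, setting $M_x := R_{j_1,\ldots,j_g}$; this is a POVM since $\sum_x M_x = \sum_{j_1,\ldots,j_g} R_{j_1,\ldots,j_g} = I_d$. The post-processing is then the deterministic one that just reads off the $i$-th coordinate, $p_i(j\mid x) := \delta_{j,j_i}$, which is a valid conditional probability, and the marginal relation of Definition \ref{def:jointPOVM} becomes precisely $E^{(i)}_j = \sum_{x:\, j_i = j} M_x = \sum_x p_i(j\mid x) M_x$.

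For the ``if'' direction, suppose we are given a POVM $M \in (\mathcal M(\mathbb C)_d^{\mathrm{sa}})^m$ and conditional probabilities $p_i(\cdot\mid x)$ with $E^{(i)}_j = \sum_{x=1}^m p_i(j\mid x) M_x$. I would define the candidate joint POVM by
\[
R_{j_1,\ldots,j_g} := \sum_{x=1}^m \Big(\prod_{i=1}^g p_i(j_i\mid x)\Big) M_x .
\]
Positivity $R_{j_1,\ldots,j_g} \geq 0$ is immediate since each $M_x \geq 0$ and the scalar coefficients $\prod_i p_i(j_i\mid x)$ are non-negative. Summing over all outcome tuples and using $\sum_{j_i \in [k_i]} p_i(j_i\mid x) = 1$ factor by factor gives $\sum_{j_1,\ldots,j_g} R_{j_1,\ldots,j_g} = \sum_x M_x = I_d$; in particular $R_{j_1,\ldots,j_g} \leq I_d$, so each $R_{j_1,\ldots,j_g} \in \mathrm{Eff}_d$. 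Finally, summing out all indices except the $u$-th and again collapsing the corresponding sums of conditional probabilities to $1$ yields $\sum_{j_i,\, i \neq u} R_{j_1,\ldots,j_g} = \sum_x p_u(v\mid x) M_x = E^{(u)}_v$, which is the marginal condition of Definition \ref{def:jointPOVM}.

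I do not expect a genuine obstacle: the statement is a direct dictionary between the two formulations. The only point needing a little care is the second direction, where one must check that the explicitly constructed operators $R_{j_1,\ldots,j_g}$ are honest effects --- positivity is clear, while the upper bound $R_{j_1,\ldots,j_g} \leq I_d$ is not imposed separately but follows from $R_{j_1,\ldots,j_g} \geq 0$ together with the normalization $\sum R_{j_1,\ldots,j_g} = I_d$. Beyond that, the only slightly fiddly ingredient is the multi-index bookkeeping, namely splitting the product over $i$ into the $u$-th factor and the remaining ones when verifying the marginals.
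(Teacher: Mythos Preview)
Your argument is correct and is the standard elementary proof of the equivalence: one direction just relabels the joint POVM and reads off the $i$-th coordinate deterministically, the other builds the joint POVM as a conditionally independent mixture via $R_{j_1,\ldots,j_g}=\sum_x\prod_i p_i(j_i\mid x)M_x$. All the checks you list (positivity, normalization, marginals) go through exactly as you wrote.

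As for comparison with the paper: the paper does not actually prove this lemma at the point it is stated; it attributes the equivalence to \cite{Heinosaari2016} and moves on. The only place the paper touches on a proof is the Remark after Theorem~\ref{thm:comp-norm}, where it observes that, \emph{in the special case of dichotomic measurements}, the SDP reformulation of the compatibility norm (with the extreme points $\epsilon_l\in\{\pm1\}^g$) shows that the post-processing POVM $\{K_l\}$ is already a joint POVM whose marginals are the $E_i$. That argument is restricted to two outcomes per measurement and relies on the tensor-norm machinery developed in Section~\ref{sec:compatibility-norms}; your proof covers the general $k_i$-outcome case directly and needs nothing beyond the definitions.
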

For a collection of dichotomic measurements $\{E_i, I - E_i\}$, $i \in [g]$, we will often say that the effects $\{E_i\}_{i \in [g]}$ are compatible, since they completely determine the corresponding measurements. We will focus on dichotomic measurements in this work.

Not all measurements in quantum mechanics are compatible. An example of incompatible measurements are $\{P, I-P\}$ and $\{Q, I-Q\}$, where $P$, $Q$ are non-commuting orthogonal projections. However, any collection of measurements can be made compatible if we add enough noise. By adding noise we mean taking the convex combination of a POVM and a trivial measurement, i.e.\ a POVM in which all effects are proportional to the identity. These measurements are called trivial, because they do not depend on the quantum state of the system that we measure. Instead, one just generates a random outcome with probabilities given by the weights appearing in the effects of the measurement. We will focus on a specific type of noise, namely \emph{white noise}, which is a balanced form of noise:
\begin{defi}\label{def:Gamma}
Let $\mathbf k \in \mathbb N^g$, $d$, $g \in \mathbb N$. Then, we call
\begin{equation*}
\Gamma(g,d, \mathbf k) := \Set{s \in [0,1]^g: \{s_{i} E^{({i})}_j + (1-s_{i})I/k_{i}\}_{j \in [k_i]} \mathrm{~compatible~}\forall \mathrm{~POVMs~}E^{({i})} \in (\mathcal M(\mathbb C)_d^{\mathrm{sa}})^{k_{i}}}
\end{equation*}
the \emph{compatibility region} for $g$ POVMs in $d$ dimensions with $k_i$ outcomes, $i \in [g]$. If $\mathbf k = 2^{\times g}$, then we just write $\Gamma(g,d)$.
\end{defi}
Thus, the compatibility region describes the amount of noise that makes any collection of $g$ measurements with $\mathbf k$ outcomes compatible in fixed dimension $d$. Hence, it tells quantifies the maximal amount of incompatibility available for these fixed parameters. Since the convex combination of two POVMs is again a POVM, the set $\Gamma(g,d, \mathbf k)$ is convex. We list next some straightforward containments of compatibility regions with different parameters. 
\begin{prop}
Let $g$, $d$, $d^\prime \in \mathbb N$. Moreover, let $\mathbf k$, $\mathbf k^\prime \in \mathbb N^g$. Then,
\begin{enumerate}
    \item For $d^\prime \geq d$, it holds that $\Gamma(g,d^\prime, \mathbf k) \subseteq \Gamma(g,d, \mathbf k)$.
    \item For $\mathbf k^\prime \geq \mathbf k$ (i.e.\ $k_i^\prime \geq k_i$ for all $i \in [g]$), it holds that $\Gamma(g,d, \mathbf k^\prime) \subseteq \Gamma(g,d, \mathbf k)$.
    \item $\{s \in [0,1]^g: s_1 + \ldots + s_g \leq 1\}\subseteq \Gamma(g,d, \mathbf k)$.
\end{enumerate}
\end{prop}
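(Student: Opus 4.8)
The plan is to prove the three containments separately, each by exhibiting an explicit construction of the required joint POVM.

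\textbf{Part (1).} Suppose $s \in \Gamma(g,d',\mathbf k)$ with $d' \ge d$, and let $E^{(i)} \in (\mathcal M(\mathbb C)_d^{\mathrm{sa}})^{k_i}$ be arbitrary $d$-dimensional POVMs. The idea is to embed these into $d'$ dimensions by padding with zeros (or, slightly more carefully, by using a block-diagonal extension so that the identity is still summed to). Concretely, fix an isometry $V \colon \mathbb C^d \hookrightarrow \mathbb C^{d'}$ and set $\tilde E^{(i)}_j = V E^{(i)}_j V^* $ for $j \in [k_i]$, plus one extra outcome absorbing $I_{d'} - V V^*$ attached to one of the indices (or split among all of them). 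Since $s \in \Gamma(g,d',\mathbf k)$ — here one must be slightly careful about the number of outcomes, so it is cleaner to instead note that the noisy $d$-dimensional POVMs $\{s_i E^{(i)}_j + (1-s_i) I_d/k_i\}_j$ extend, via $V\cdot V^*$ plus the defect term, to genuine $d'$-dimensional noisy POVMs; compatibility of the latter restricts back to compatibility of the former because conjugation by $V^*$ and summing marginals commute. Thus $s \in \Gamma(g,d,\mathbf k)$.

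\textbf{Part (2).} Suppose $s \in \Gamma(g,d,\mathbf k')$ with $k_i' \ge k_i$, and let $E^{(i)}$ be $d$-dimensional POVMs with $k_i$ outcomes. Pad each to a POVM with $k_i'$ outcomes by appending $k_i' - k_i$ zero effects. The noisy versions with parameter $s$ must be made compatible; but white noise with $k_i'$ outcomes differs from white noise with $k_i$ outcomes, so again the clean way is: by hypothesis the noisy POVMs $\{s_i E'^{(i)}_j + (1-s_i)I/k_i'\}_{j\in[k_i']}$ (with $E'^{(i)}$ the zero-padded version) are compatible, and post-composing the joint POVM with the deterministic relabelling that merges the padded outcomes back shows the original noisy POVMs $\{s_i E^{(i)}_j + (1-s_i)I/k_i\}_{j\in[k_i]}$ are compatible — here one checks the white-noise term transforms correctly under the merging, which it does since merging the padded outcome into outcome $k_i$ converts $I/k_i'$-weights into $I/k_i$-weights precisely. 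So $s \in \Gamma(g,d,\mathbf k)$.

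\textbf{Part (3).} Let $s \in [0,1]^g$ with $s_1 + \cdots + s_g \le 1$, and let $E^{(i)}$ be arbitrary POVMs. I would build the joint POVM on index set $[k_1]\times\cdots\times[k_g]$ directly. Writing $s_0 := 1 - \sum_i s_i \ge 0$, define
\begin{equation*}
  R_{j_1,\ldots,j_g} = \frac{s_0}{k_1\cdots k_g} I_d + \sum_{u=1}^g \frac{s_u}{\prod_{i\neq u} k_i} \, \delta_{\text{?}} \, E^{(u)}_{j_u},
\end{equation*}
i.e. more precisely $R_{j_1,\ldots,j_g} = \tfrac{s_0}{k_1\cdots k_g} I + \sum_{u=1}^g \tfrac{1}{\prod_{i\ne u}k_i}\big(s_u E^{(u)}_{j_u}\big)$ — one distributes, for each $u$, the effect $s_u E^{(u)}_{j_u}$ uniformly over all choices of the remaining indices. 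Each $R_{j_1,\ldots,j_g}$ is positive as a nonnegative combination of positive operators, and summing over all indices gives $s_0 I + \sum_u s_u I = I$, so it is a valid POVM. Finally, summing out all indices except $u$ leaves exactly $\tfrac{s_0}{k_u} I + s_u E^{(u)}_{j_u} = s_u E^{(u)}_{j_u} + (1-s_u) I/k_u$ once one checks $s_0/k_u = (1-\sum_i s_i)/k_u$ matches — wait, this needs the white-noise weight to be $(1-s_u)/k_u$, which requires $s_0 + \sum_{v\ne u} s_v = 1 - s_u$, true by definition of $s_0$. Hence the noisy POVMs are marginals of $R$, proving compatibility.

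\textbf{Main obstacle.} The routine but genuinely fiddly point is bookkeeping in Parts (1) and (2): the definition of $\Gamma$ ties the white-noise term $I/k_i$ to the outcome count, so one cannot naively ``add outcomes'' or ``change dimension'' without re-checking that white noise maps to white noise under the relevant embedding or coarse-graining. Getting these normalizations to line up — especially the dimension-padding defect term in (1), where the extra outcome $I_{d'}-VV^*$ must be handled so that no spurious incompatibility is introduced — is where care is needed; Part (3) is a direct and clean construction by comparison.
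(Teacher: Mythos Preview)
Your overall approach matches the paper's --- embed, pad, or build the joint POVM directly --- and Parts (1) and (3) are fine. (Part (3) simply writes out explicitly the joint POVM that the paper's convexity argument, based on $e_i \in \Gamma(g,d,\mathbf k)$ and $0 \in \Gamma(g,d,\mathbf k)$, implicitly constructs.) Part (2), however, contains a genuine error.

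The problem is the claim that ``merging the padded outcome into outcome $k_i$ converts $I/k_i'$-weights into $I/k_i$-weights precisely.'' It does not. After zero-padding $E^{(i)}$ to $k_i'$ outcomes and adding white noise, outcome $j \le k_i$ carries noise $(1-s_i)I/k_i'$, and each padded outcome $j > k_i$ equals $(1-s_i)I/k_i'$. A deterministic relabelling sending all padded outcomes to outcome $k_i$ gives that outcome noise $(k_i'-k_i+1)(1-s_i)I/k_i'$, while outcomes $1,\ldots,k_i-1$ keep noise $(1-s_i)I/k_i'$; this is \emph{not} the white noise $(1-s_i)I/k_i$ you need. In fact, no deterministic coarse-graining $[k_i']\to[k_i]$ can map uniform noise to uniform noise unless $k_i \mid k_i'$, since the fiber sizes would all have to equal $k_i'/k_i$.

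The fix is to use a \emph{stochastic} post-processing: set $p(j\mid j') = \delta_{jj'}$ for $j' \le k_i$, and $p(j\mid j') = 1/k_i$ uniformly for each padded outcome $j' > k_i$. Then the extra noise is spread evenly over the $k_i$ original outcomes, and a one-line computation gives
\[
\sum_{j'=1}^{k_i'} p(j\mid j')\Bigl(s_i E'^{(i)}_{j'} + (1-s_i)\tfrac{I}{k_i'}\Bigr)
= s_i E^{(i)}_j + (1-s_i)\tfrac{I}{k_i'}\Bigl(1 + \tfrac{k_i'-k_i}{k_i}\Bigr)
= s_i E^{(i)}_j + (1-s_i)\tfrac{I}{k_i},
\]
as required. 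This is exactly the ``fiddly bookkeeping'' you flagged in your Main obstacle paragraph; you were right to be wary, but the specific check you asserted is the one that fails.
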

The first assertion follows, because we can embed POVMs of a certain dimension into a higher dimension. For details see \cite[Proposition 3.6]{bluhm2018joint}. The second assertion follows because we interpret POVMs with a certain number of outcomes as POVMs with more outcomes, but for which some outcomes occur with probability $0$ for all states. See \cite[Proposition 3.35]{bluhm2020compatibility} for a formal proof. The last point of the assertion follows by convexity, since $e_i \in \Gamma(g,d, \mathbf k)$ for all $i \in [g]$, because any single measurement is compatible with any number of trivial measurements. A physical way to think about this is the following \cite{Heinosaari2016}: Given a quantum state, $(s_1, \ldots, s_g)$ defines a random variable which selects the measurement which is implemented. For all other measurements, an outcome is generated by picking an outcome uniformly at random. This procedure then defines the joint measurement, rendering the noisy measurements defined by $(s_1, \ldots, s_g)$  compatible.

Less straightforward are the following results, which have appeared in the literature:
\begin{prop} \label{prop:gamma-known-before}
Let $g$, $d \in \mathbb N$. Then,
\begin{enumerate}
    \item It holds that $\Gamma(2,d)=\{s \in [0,1]^2: \|s\|_2 \leq 1\}$.
    \item It holds that $\Gamma(3,2)=\{s \in [0,1]^3: \|s\|_2 \leq 1\}$.
\end{enumerate}
\end{prop}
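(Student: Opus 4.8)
The plan is to reduce both statements to a single concrete construction and a matching obstruction, exploiting the symmetry of the problem. For the upper bound (the inclusion ``$\subseteq$''), I would exhibit a collection of POVMs whose noisy versions fail to be compatible whenever $\|s\|_2 > 1$. In the dichotomic case $g=2$, the natural candidates are $E_1 = (I + \sigma_1)/2$ and $E_2 = (I + \sigma_2)/2$ built from two anticommuting Pauli-like observables; for $g=3$, $d=2$ one takes the three Pauli matrices $\sigma_1, \sigma_2, \sigma_3$ and sets $E_i = (I + \sigma_i)/2$. The noisy effects are then $s_i E_i + (1-s_i)I/2 = (I + s_i \sigma_i)/2$, so compatibility of $\{(I \pm s_i\sigma_i)/2\}$ is at stake. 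Using Lemma~\ref{lem:post-processing}, or more directly Definition~\ref{def:jointPOVM}, one writes a putative joint POVM $R_{\pm\cdots\pm}$ and expands it in the Pauli basis; the marginal conditions pin down the linear part, and positivity of all $2^g$ joint effects becomes a system of linear inequalities in the coordinates $s$. For $g=2$ this forces $s_1^2 + s_2^2 \le 1$ directly; for $g=3$, $d=2$ the positivity of the $8$ operators $\tfrac18(I \pm s_1\sigma_1 \pm s_2\sigma_2 \pm s_3\sigma_3)$ is governed by the largest eigenvalue $\tfrac18(1 + \|s\|_2)$ being $\le \tfrac14$ — wait, one must be careful: the correct joint POVM need not have this simple form, so the clean computation only gives a \emph{sufficient} condition, and the converse (that nothing better works) is the real content.

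For the lower bound (the inclusion ``$\supseteq$''), I would give an explicit joint POVM valid for \emph{all} POVMs whenever $\|s\|_2 \le 1$. The cleanest route for $g=2$ dichotomic measurements is: given arbitrary effects $E_1, E_2$, write $A_i = 2E_i - I \in [-I, I]$, and define
\begin{equation*}
R_{\varepsilon_1,\varepsilon_2} = \tfrac14\bigl(I + \varepsilon_1 s_1 A_1 + \varepsilon_2 s_2 A_2\bigr), \qquad \varepsilon_i \in \{+,-\}.
\end{equation*}
These sum to $I$ and have the right marginals by construction; positivity of each requires $\|s_1 A_1 + s_2 A_2\|_\infty \le 1$ for all sign choices, and here one uses the operator-norm bound together with $\|A_i\|_\infty \le 1$. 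The subtlety is that the naive triangle inequality only gives $s_1 + s_2 \le 1$, which is the trivial region already known; to get the Euclidean ball one must use a smarter joint POVM — a rotation/averaging trick over a continuum of sign patterns, i.e.\ integrate $\tfrac14(I + \cos\theta\, A_1 \pm \sin\theta\, A_2)$-type operators against a measure, or equivalently invoke the known characterization that two effects with $\|s\|_2 \le 1$ are jointly measurable (this is the Busch criterion, from \cite{busch2013comparing}). I would cite that criterion for $g=2$ rather than reprove it, and for $g=3$, $d=2$ use the analogous qubit result that a triple of noisy qubit effects along the Pauli directions is jointly measurable iff $\|s\|_2 \le 1$, which again appears in \cite{busch2013comparing, Gudder2013}.

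Finally, to upgrade from ``the Pauli triple is the worst case'' to the stated equality, I would argue that the extremal configuration can always be taken to consist of dichotomic measurements with traceless, mutually orthogonal directions: by a compactness and symmetrization argument (averaging over the unitary group, using the invariance of both $\Gamma$ and the $\ell_2$-ball under the induced symmetry), any collection of $g \le d^2 - 1$ qubit-type effects can be majorized in incompatibility by an orthogonal Pauli-like family, and for $g=2$ this needs only a $2$-dimensional real span regardless of $d$, which explains why (1) holds for all $d$ while (2) is special to $d = 2$. The main obstacle is precisely this last reduction — showing that no cleverer choice of high-dimensional POVMs beats the simple qubit construction; the forward inclusions from a fixed example and the backward inclusions from an explicit joint POVM are comparatively routine once the right construction is written down. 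In a review such as this, I would likely present the explicit joint POVM for the ``$\supseteq$'' direction in detail and defer the matching ``$\subseteq$'' bound to the cited literature \cite{busch2013comparing, Gudder2013, bluhm2018joint}, since a self-contained proof of optimality is exactly what the later tensor-norm machinery of this paper is designed to supply.
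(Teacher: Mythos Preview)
The paper does not prove this proposition at all: immediately after stating it, the text simply records that part (1) follows from \cite{busch2013comparing} together with \cite[Proposition~3]{busch2008approximate}, and that part (2) can be inferred from \cite{busch1986unsharp, Brougham2007, Pal2011}. The proposition is presented purely as background (``results which have appeared in the literature''), and the paper's own re-derivation of these equalities comes only later, via the identification $\Gamma(g,d) = \Delta_\diamond(g,d) = \Delta_\square(g,d)$ together with Proposition~\ref{prop:quarter-circle}; see the discussion in Section~\ref{sec:final-section}.

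Your proposal is therefore considerably more ambitious than what the paper actually does at this point, and since you yourself end up deferring the substantive steps to the same literature (the Busch criterion for the inclusion $\{\|s\|_2 \le 1\} \subseteq \Gamma$, and the cited papers for optimality of the Pauli example), the two treatments are in practice equivalent. If, however, you intended a self-contained argument, two genuine gaps remain. First, for the inclusion $\Gamma \subseteq \{\|s\|_2 \le 1\}$ you need that the noisy Pauli effects are \emph{incompatible} whenever $\|s\|_2 > 1$; observing that the canonical candidates $\tfrac{1}{2^g}\bigl(I + \sum_i \epsilon_i s_i \sigma_i\bigr)$ fail to be positive is not enough, since some other joint POVM might still work---you flag this yourself but do not resolve it. Second, for the inclusion $\{\|s\|_2 \le 1\} \subseteq \Gamma$ the quantifier runs over \emph{all} $g$-tuples of effects in dimension $d$, and your symmetrization sketch (reducing an arbitrary tuple to an orthogonal Pauli-like family) is not a proof: for $g=2$ and general $d$ the two effects need not sit in a common qubit block, and carrying out this reduction is precisely the content of \cite{busch2008approximate}. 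The tensor-norm and matrix-convex-set machinery developed later in the paper is designed exactly to bypass both of these case-by-case reductions.
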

The first result follows from the results of \cite{busch2013comparing} together with \cite[Proposition 3]{busch2008approximate}, while the second result can be inferred from \cite{busch1986unsharp, Brougham2007, Pal2011}.

\subsection{Tensor norms}
In this section, we will recap the basics of tensor products of Banach spaces. We refer the reader to \cite{Ryan2002} for a good introduction. Let $X$, $Y$ be two Banach spaces with norms $\norm{\cdot}_X$ and $\norm{\cdot}_Y$, respectively. Let $X^\ast$ and $Y^\ast$ be the dual spaces of $X$ and $Y$, respectively, and let their norms be $\norm{\cdot}_{X^\ast}$ and $\norm{\cdot}_{Y^\ast}$. The question is now which norm to put on the tensor product of these two vector spaces, $X \otimes Y$, to make it into a Banach space. Usually, there are infinitely many norms that one could use. To restrict this multitude of possibilities, it is natural to require that the norm on the tensor product $X \otimes Y$ behaves nicely with respect to pure tensors. This motivates the notion of \emph{reasonable crossnorms}:

\begin{defi}[\cite{Ryan2002}]
Let $X$ and $Y$ be two Banach spaces. We say that a norm $\norm{\cdot}_\alpha$ on $X \otimes Y$ is a \emph{reasonable crossnorm} if it has the following properties:
\begin{enumerate}
\item $\norm{x \otimes y}_\alpha \leq \norm{x}_X \norm{y}_Y$ for all $x \in X$, $y \in Y$,
\item For all $\phi \in X^\ast$, for all $\psi \in Y^\ast$, $\phi \otimes \psi$ is bounded on $X \otimes Y$ and $\norm{\phi \otimes \psi}_{\alpha^\ast} \leq \norm{\phi}_{X^\ast} \norm{\psi}_{Y^\ast}$,
\end{enumerate} 
where $\norm{\cdot}_{\alpha^\ast}$ is the dual norm to $\norm{\cdot}_\alpha$.
\end{defi}
There are two examples of reasonable crossnorms which can be defined on any pair of Banach spaces, the \emph{injective} and the \emph{projective} tensor norm.

\begin{defi}[Projective tensor norm]
The \emph{projective norm} of an element $z \in X \otimes Y$ is defined as
\begin{equation*}
    \norm{z}_{X \otimes_\pi Y}:= \inf \left\{\sum_i \norm{x_i}_X \norm{y_i}_Y: z = \sum_{i} x_i \otimes y_i \right\}.
\end{equation*}
\end{defi}

\begin{defi}[Injective tensor norm]
Let $z = \sum_i x_i \otimes y_i \in X \otimes Y$. Then, its \emph{injective norm} is 
\begin{equation*}
      \norm{z}_{X \otimes_\epsilon Y}:= \sup \left\{\left| \sum_i \phi(x_i) \psi(y_i)\right|: \norm{\phi}_{X^\ast} \leq 1, \norm{\psi}_{Y^\ast} \leq 1   \right\}.
\end{equation*}
\end{defi}
Importantly, $\norm{\cdot}_\epsilon$ and $\norm{\cdot}_\pi$ are dual norms, i.e.
\begin{equation*}
    \norm{z}_{X \otimes_\epsilon Y} = \sup_{\norm{\phi}_{X^\ast \otimes_\pi Y^\ast} \leq 1} |\phi(z)|
\end{equation*}
and vice versa.

The injective and projective norms are of special interest because they constitute the smallest and largest reasonable crossnorms we can put on $X \otimes Y$, respectively:
\begin{prop}[{\cite[Proposition 6.1]{Ryan2002}}] \label{prop:reasonable-cross-norms}
Let $X$ and $Y$ be Banach spaces. 
\begin{enumerate}
\item[(a)] A norm $\norm{\cdot}_\alpha$ on $X \otimes Y$ is a reasonable crossnorm if and only if
\begin{equation*}
\norm{z}_{X \otimes_\epsilon Y} \leq \norm{z}_\alpha \leq \norm{z}_{X \otimes_\pi Y}
\end{equation*}
for all $z \in X \otimes Y$
\item[(b)] If $\norm{\cdot}_\alpha$ is a reasonable crossnorm on $X \otimes Y$, then $\norm{x \otimes y}_\alpha = \norm{x}_X \norm{y}_Y$ for every $x \in X$ and every $y \in Y$. Furthermore, for all $\phi \in X^\ast$ and all $\psi \in Y^\ast$, the norm $\norm{\cdot}_{\alpha^\ast}$ satisfies $\norm{\phi \otimes \psi}_{\alpha^\ast} = \norm{\phi}_{X^\ast} \norm{\psi}_{Y^\ast}$.
\end{enumerate}
\end{prop}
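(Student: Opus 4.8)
\emph{Proof proposal.} The plan is to derive both parts directly from the two defining conditions of a reasonable crossnorm and the explicit formulas for $\|\cdot\|_\epsilon$ and $\|\cdot\|_\pi$, using nothing beyond Hahn--Banach and the $\epsilon$--$\pi$ duality recalled above. Everything reduces to the elementary observation that a pointwise inequality between two norms dualizes to the reverse inequality between the dual norms (the unit ball of the larger norm being contained in the unit ball of the smaller one).

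For part (a), I would first prove the implication ``if $\|z\|_\epsilon \le \|z\|_\alpha \le \|z\|_\pi$ holds for all $z \in X \otimes Y$, then $\|\cdot\|_\alpha$ is a reasonable crossnorm''. Condition (1) is immediate: plugging the one-term representation $z = x \otimes y$ into the definition of the projective norm gives $\|x \otimes y\|_\pi \le \|x\|_X \|y\|_Y$, whence $\|x \otimes y\|_\alpha \le \|x\|_X \|y\|_Y$. For condition (2), the inequality $\|\cdot\|_\epsilon \le \|\cdot\|_\alpha$ forces $\|\cdot\|_{\alpha^\ast} \le \|\cdot\|_{\epsilon^\ast}$ on the dual, and since $\|\cdot\|_{\epsilon^\ast} = \|\cdot\|_{X^\ast \otimes_\pi Y^\ast}$ by the stated duality, we get $\|\phi \otimes \psi\|_{\alpha^\ast} \le \|\phi \otimes \psi\|_{X^\ast \otimes_\pi Y^\ast} \le \|\phi\|_{X^\ast} \|\psi\|_{Y^\ast}$, the last step being again the trivial one-term bound for the projective norm. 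Conversely, assume $\|\cdot\|_\alpha$ is a reasonable crossnorm. For an arbitrary representation $z = \sum_i x_i \otimes y_i$, the triangle inequality and condition (1) give $\|z\|_\alpha \le \sum_i \|x_i\|_X \|y_i\|_Y$; taking the infimum over representations yields $\|z\|_\alpha \le \|z\|_\pi$. And for $\|\phi\|_{X^\ast} \le 1$, $\|\psi\|_{Y^\ast} \le 1$, condition (2) gives $\|\phi \otimes \psi\|_{\alpha^\ast} \le 1$, hence $|(\phi \otimes \psi)(z)| \le \|z\|_\alpha$; the supremum over all such $\phi, \psi$ is by definition $\|z\|_\epsilon$, so $\|z\|_\epsilon \le \|z\|_\alpha$.

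For part (b), I would combine part (a) with Hahn--Banach. From (a) and the computation just made, $\|x \otimes y\|_\epsilon \le \|x \otimes y\|_\alpha \le \|x\|_X \|y\|_Y$; on the other hand the injective norm of a pure tensor saturates the bound, since $\|x \otimes y\|_\epsilon = \sup\{|\phi(x)|\,|\psi(y)| : \|\phi\|_{X^\ast} \le 1,\ \|\psi\|_{Y^\ast} \le 1\} = \|x\|_X \|y\|_Y$, using $\|x\|_X = \sup_{\|\phi\|_{X^\ast} \le 1} |\phi(x)|$ and likewise for $Y$. Hence $\|x \otimes y\|_\alpha = \|x\|_X \|y\|_Y$. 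For the dual statement, condition (2) already provides $\|\phi \otimes \psi\|_{\alpha^\ast} \le \|\phi\|_{X^\ast} \|\psi\|_{Y^\ast}$; for the reverse inequality I would evaluate $\phi \otimes \psi$ on pure tensors $x \otimes y$ with $\|x\|_X = \|y\|_Y = 1$, which by the equality just proved satisfy $\|x \otimes y\|_\alpha = 1$, and then optimize over $x$ and $y$ to obtain $\|\phi \otimes \psi\|_{\alpha^\ast} \ge \|\phi\|_{X^\ast} \|\psi\|_{Y^\ast}$.

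I do not anticipate a genuine obstacle: the whole argument is bookkeeping about how norm domination behaves under duality. The one place needing care is the invocation of the $\epsilon$--$\pi$ duality, which in full generality mixes the algebraic tensor product with its completion; since that duality is quoted above as a known fact I would simply cite it, and I would phrase all dual-norm assertions on the algebraic tensor products $X \otimes Y$ and $X^\ast \otimes Y^\ast$ so as to sidestep completion issues.
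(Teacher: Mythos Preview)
The paper does not supply its own proof of this proposition; it is quoted verbatim from \cite[Proposition 6.1]{Ryan2002} and used as a black box. Your argument is correct and is essentially the standard textbook proof: the sandwich $\|\cdot\|_\epsilon \le \|\cdot\|_\alpha \le \|\cdot\|_\pi$ is obtained from the triangle inequality plus condition (1) on one side and from condition (2) plus the definition of the injective norm on the other, with part (b) following by computing $\|x\otimes y\|_\epsilon$ via Hahn--Banach. There is nothing to compare against here.
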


We conclude with some examples:
\begin{ex}
The projective norm $\ell^{g}_1 \otimes_\pi X$ of a vector $z = \sum_{i=1}^g e_i \otimes z_i \in \mathbb R^g \otimes X$ is given by (see e.g.\ \cite[Example 2.6]{Ryan2002})
\begin{equation} \label{eq:projective-norm}
    \|z\|_{\ell_1^g \otimes_\pi X} =  \sum_{i=1}^g \|z_i\|_{X}.
\end{equation}
The injective norm $\ell^g_1 \otimes_\epsilon X$ of the same $z$ is (see e.g.\ \cite[Example 3.4]{Ryan2002})
\begin{equation} \label{eq:injective-norm}
\|z\|_{\ell_1^g \otimes_\epsilon X} = \sup_{\norm{y}_{X^\ast} \leq 1} \sum_{i = 1}^g |\langle y, z_i \rangle| = \sup_{\epsilon \in \{\pm 1\}^g} \norm{\sum_{i = 1}^g \epsilon_i z_i}_X.
\end{equation}
\end{ex}

\subsection{Matrix convex sets}

In this section, we will review some basic results from the theory of matrix convex sets. More background can be found in \cite{davidson2016dilations}, for example. We will write $\mathrm{UCP}(\mathcal B(\mathcal H), \mathcal B(\mathcal K))$ for the set of unital completely positive maps from the bounded operators on a Hilbert space $\mathcal H$ to bounded operators on a Hilbert space $\mathcal K$.

\begin{defi}
Let $g \in \mathbb N$. Moreover, let $\mathcal F_n \subseteq (\mathcal M(\mathbb C)_n^{\mathrm{sa}})^g$ for all $n \in \mathbb N$. Then, we call $\mathcal F = \bigsqcup_{n \in \mathbb N} \mathcal F_n$ a \emph{free set}. Moreover, $\mathcal F$ is a \emph{matrix convex set} if it satisfies the following two properties for any $m$, $n \in \mathbb N$:
\begin{enumerate}
\item If $X = (X_1, \ldots, X_g) \in \mathcal F_m$, $Y = (Y_1, \ldots, Y_g)\in \mathcal F_n$, then $X \oplus Y := (X_1 \oplus Y_1, \ldots, X_g \oplus Y_g) \in \mathcal F_{m+n}$
\item If $X = (X_1, \ldots, X_g) \in \mathcal F_m$ and $\Psi: \mathcal M(\mathbb C)_m \to \mathcal M(\mathbb C)_n$ is a unital completely positive (UCP) map, then $(\Psi(X_1), \ldots, \Psi(X_g)) \in \mathcal F_n$.
\end{enumerate}
That is, a matrix convex set is a free set closed under direct sums and UCP maps.
\end{defi}
In particular, it follows from the definition that all sets $\mathcal F_n$ are convex. A matrix convex set $\mathcal F$ is open/closed/bounded if all $\mathcal F_n$ defining it have this property. 

Let $\mathcal C \subseteq \mathbb R^g$ be a convex set. In general, there are infinitely many matrix convex sets $\mathcal F$ with $\mathcal F_1 = \mathcal C$. However, we can find a maximal and a minimal matrix convex set equal to $\mathcal C$ at the first level. We start with the definition of the maximal matrix convex set \cite[Definition 4.1]{davidson2016dilations}:
\begin{align*}
\label{eq:Wmax}&\mathcal W^{\max}_n(\mathcal C) :=\\
& \Set{X \in (\mathcal M(\mathbb C)_n^{\mathrm{sa}})^g: \sum_{i = 1}^g c_i X_i \leq \alpha I, \quad \forall \, c \in \mathbb R^g, \forall \alpha \in \mathbb R~\mathrm{\ s.t.\ }~\mathcal C \subseteq \Set{x \in \mathbb R^g: \langle c, x\rangle \leq \alpha}}.
\end{align*}
Note that $\mathcal W^{\max}_1(\mathcal C) = \mathcal C$, as claimed above.

We can now go on to define the minimal matrix convex set associated with $\mathcal C$. We use the definition given in \cite[Eq.~(1.4)]{passer2018minimal}:
\begin{equation*}
    \mathcal W^{\min}_n(\mathcal C) := \Set{\sum_{j} z_j \otimes X_j \in (\mathcal M(\mathbb C)_n^{\mathrm{sa}})^g: z_j \in \mathcal C~ \forall j, X_j \geq 0~\forall j, \sum_{j} X_j =I}.
\end{equation*}
Note that if $\mathcal C$ is a polytope, i.e.~it has finitely many extreme points, the number of terms in the decomposition above can be taken to be the number of
extreme points of $\mathcal C$; see the discussion after Eq.~\eqref{eq:c-norm-SDP}.
 
In this work, we will be concerned with inclusion constants, i.e.\ constants for which the inclusion
\begin{equation*}
 s\cdot \mathcal W^{\max}(\mathcal C) \subseteq \mathcal W^{\min}(\mathcal C)
\end{equation*}
holds. Here, the \emph{(asymmetrically) scaled} matrix convex set is 
\begin{equation*}
s \cdot \mathcal W^{\max}(\mathcal C):= \Set{(s_1 X_1, \ldots, s_g X_g): X \in \mathcal W^{\max}(\mathcal C)}.
\end{equation*}
\begin{defi}\label{def:Delta}
Let $d$, $g \in \mathbb N$ and $\mathcal C \subset \mathbb R^g$. The \emph{inclusion set} is defined as 
\begin{equation*}
\Delta_{\mathcal C}(d):=\Set{s \in [0,1]^g:  s \cdot \mathcal W^{\max}_d(\mathcal C) \subseteq  \mathcal W^{\min}_d(\mathcal C)}.
\end{equation*}
If $\mathcal C$ is the $\ell_\infty^g$ (resp.~the $\ell_1^g$) 
unit ball, we will write $\Delta_{\square}(g,d)$ (resp.~$\Delta_{\diamond}(g,d)$) instead of $\Delta_{\mathcal C}(d)$.
\end{defi}
Note that $\Delta_{\mathcal C}(d)$ is a convex set, because both $W_d^{\min}(\mathcal C)$ and $W_d^{\max}(\mathcal C)$ are.

We will also use the notion of dual matrix convex sets in this article, see \cite[Section 3]{davidson2016dilations}. 
\begin{defi}
Let $g\in \mathbb N$ and let $\mathcal F \subseteq \bigsqcup_{n \in \mathbb N}(\mathcal M(\mathbb C)_n^{\mathrm{sa}})^g$ be a matrix convex set. Then, its \emph{polar dual} $\mathcal F^\bullet$ is defined as
\begin{equation*}
    \mathcal F^\bullet_d := \left\{X \in (\mathcal M(\mathbb C)_d^{\mathrm{sa}})^g:  \sum_{i = 1}^g X_i \otimes F_i \leq I~\forall F \in \mathcal F\right\}
\end{equation*}
for all $d \in \mathbb N$.
\end{defi}
Note that if $0 \in \mathcal F$, then $\mathcal F^{\bullet \bullet} = \mathcal F$ \cite[Lemma 3.2]{davidson2016dilations}. It turns out that the minimal and maximal matrix convex sets are polar duals. We collect this fact in the following Lemma, which appears as Theorem 4.7 in \cite{davidson2016dilations}.
\begin{lem}\label{lem:mcs-dual}
Let $\mathcal C \subset \mathbb R^g$ be a closed convex set. Then $\mathcal W^{\min}(\mathcal C)^\bullet = \mathcal W^{\max}(\mathcal C^\circ)$, where $\mathcal C^\circ$ is the polar convex set of $\mathcal C$. If $0 \in \mathcal C$, then $\mathcal W^{\max}(\mathcal C)^\bullet = \mathcal W^{\min}(\mathcal C^\circ)$.
\end{lem}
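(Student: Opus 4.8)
The plan is to prove both equalities by giving, for each $d$, a single explicit description of the two sides. Concretely, I would show
\[
\mathcal W^{\min}(\mathcal C)^\bullet_d \;=\; \Bigl\{X \in (\mathcal M(\mathbb C)_d^{\mathrm{sa}})^g : \textstyle\sum_{i=1}^g c_i X_i \leq I_d \ \ \forall\, c \in \mathcal C^{\circ\circ}\Bigr\} \;=\; \mathcal W^{\max}_d(\mathcal C^\circ),
\]
where $\mathcal C^{\circ\circ}=\overline{\mathrm{conv}}(\mathcal C\cup\{0\})$ by the bipolar theorem. The second statement of the lemma will then follow formally from the first.

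For the left equality I would argue as follows. If $X\in\mathcal W^{\min}(\mathcal C)^\bullet_d$, then testing the defining inequality against the first-level points $z\in\mathcal C=\mathcal W^{\min}_1(\mathcal C)$ (tensored with the scalar $1$) gives $\sum_i z_i X_i\leq I_d$ for all $z\in\mathcal C$; since the set of such $X$ is convex and closed, the inequality is vacuous at $z=0$ (it reads $0\leq I_d$), and replacing $z$ by $\lambda z$ with $\lambda\in[0,1]$ only weakens it, this extends to all $c\in\overline{\mathrm{conv}}(\mathcal C\cup\{0\})=\mathcal C^{\circ\circ}$. Conversely, if $\sum_i c_i X_i\leq I_d$ for all $c\in\mathcal C^{\circ\circ}$ and $F=\sum_j z_j\otimes P_j\in\mathcal W^{\min}_n(\mathcal C)$, then $\sum_i X_i\otimes F_i=\sum_j A_j\otimes P_j$ with $A_j:=\sum_i (z_j)_i X_i\leq I_d$ (as $z_j\in\mathcal C\subseteq\mathcal C^{\circ\circ}$), so $I_d\otimes I_n-\sum_i X_i\otimes F_i=\sum_j (I_d-A_j)\otimes P_j$ is a sum of tensor products of positive semidefinite matrices, hence positive semidefinite; therefore $X\in\mathcal W^{\min}(\mathcal C)^\bullet_d$.

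For the right equality I would unwind the definition of $\mathcal W^{\max}$ applied to $\mathcal C^\circ$. Since $0\in\mathcal C^\circ$, every half-space $\{x:\langle c,x\rangle\leq\alpha\}$ containing $\mathcal C^\circ$ has $\alpha\geq 0$. If $\alpha>0$, the containment says precisely $c/\alpha\in(\mathcal C^\circ)^\circ=\mathcal C^{\circ\circ}$, and the operator inequality $\sum_i c_i X_i\leq\alpha I_d$ rescales to $\sum_i (c_i/\alpha) X_i\leq I_d$; if $\alpha=0$, the resulting constraint is the limit as $t\to\infty$ of those attached to $tc\in\mathcal C^{\circ\circ}$, hence redundant. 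This identifies $\mathcal W^{\max}_d(\mathcal C^\circ)$ with the middle set, completing the first statement. For the second statement, I would apply the first to the closed convex set $\mathcal C^\circ$, obtaining $\mathcal W^{\min}(\mathcal C^\circ)^\bullet=\mathcal W^{\max}(\mathcal C^{\circ\circ})=\mathcal W^{\max}(\mathcal C)$, the last equality being the bipolar theorem for the closed convex set $\mathcal C\ni 0$; taking polar duals and using $\mathcal F^{\bullet\bullet}=\mathcal F$ for $0\in\mathcal F$ (valid here since $0\in\mathcal C^\circ$ forces $0\in\mathcal W^{\min}(\mathcal C^\circ)$) then yields $\mathcal W^{\max}(\mathcal C)^\bullet=\mathcal W^{\min}(\mathcal C^\circ)$.

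I expect the only delicate point to be the reformulation of $\mathcal W^{\max}(\mathcal C^\circ)$ in the third paragraph: one must follow the supporting half-spaces of $\mathcal C^\circ$ through the bipolar correspondence, in particular the recession-type constraints with $\alpha=0$ and, when $\mathcal C$ is unbounded or $0$ fails to be interior to $\mathcal C$, the associated limiting arguments. Everything else is bookkeeping together with the elementary fact that a tensor product of positive semidefinite matrices is positive semidefinite.
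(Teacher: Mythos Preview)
The paper does not give its own proof of this lemma: it is stated as a known fact and attributed to \cite[Theorem 4.7]{davidson2016dilations}. Your proposal therefore supplies an argument where the paper has none, and the argument is correct. The two inclusions for the left equality are exactly right---testing against level-one elements $z\in\mathcal C$ gives the scalar constraints, and the converse uses $I\otimes I-\sum_i X_i\otimes F_i=\sum_j(I-A_j)\otimes P_j\geq 0$, which is precisely how one exploits the structure of $\mathcal W^{\min}$. Your unwinding of $\mathcal W^{\max}(\mathcal C^\circ)$ via the bipolar correspondence on half-spaces is also correct, including the treatment of the $\alpha=0$ case via $tc\in\mathcal C^{\circ\circ}$ for all $t>0$. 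The derivation of the second statement from the first by applying it to $\mathcal C^\circ$ and then invoking the matrix-convex bipolar $\mathcal F^{\bullet\bullet}=\mathcal F$ is the natural route.

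One small point worth flagging: the matrix-convex bipolar identity $\mathcal F^{\bullet\bullet}=\mathcal F$ as stated in \cite{davidson2016dilations} in fact also requires $\mathcal F$ to be closed (just as in the scalar case). You invoke it for $\mathcal F=\mathcal W^{\min}(\mathcal C^\circ)$; closedness of $\mathcal W^{\min}_n(\mathcal C^\circ)$ is immediate when $\mathcal C^\circ$ is compact (Carath\'eodory plus continuity), which covers all uses in this paper, but in the generality of an arbitrary closed convex $\mathcal C$ with $0\in\mathcal C$ one should either check this or, alternatively, prove the second inclusion directly by the same mechanism as the first (write elements of $\mathcal W^{\min}(\mathcal C^\circ)$ as $\sum_j w_j\otimes P_j$ with $w_j\in\mathcal C^\circ$ and verify the polar-dual inequality against arbitrary $A\in\mathcal W^{\max}(\mathcal C)$). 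This is a cosmetic refinement; your overall strategy is sound.
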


In the remainder of this section, we will demonstrate that Definition \ref{def:Delta} for the inclusion set agrees with the definition used in \cite{bluhm2018joint, bluhm2020compatibility}. If the reader is not interested in this fact, the rest of the section may be skipped. The desired fact will follow from the next proposition.

\begin{prop}
Let $g$, $d \in \mathbb N$ and let $\mathcal F\subseteq \bigsqcup_{n \in \mathbb N}(\mathcal M(\mathbb C)_n^{\mathrm{sa}})^g$ be a matrix convex set with $0 \in \mathcal F_1$. Then, $X \in \mathcal F^{\bullet}_d$ if and only if
\begin{equation*}
    \sum_{i = 1}^g X_i \otimes F_i \leq I_{d^2} \qquad \forall (F_1, \ldots, F_g) \in \mathcal F_d.
\end{equation*}
\end{prop}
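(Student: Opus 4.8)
The plan is to show both implications. The nontrivial direction is: if $\sum_{i=1}^g X_i \otimes F_i \leq I_{d^2}$ holds for all $F = (F_1,\dots,F_g) \in \mathcal F_d$ (the ``level-$d$ test''), then in fact $\sum_{i=1}^g X_i \otimes F_i' \leq I$ holds for all $F' \in \mathcal F_n$ at every level $n$ (the ``full test'' defining $\mathcal F^\bullet_d$). The reverse implication is immediate, since $\mathcal F_d$ is just one of the test sets appearing in the definition of $\mathcal F^\bullet_d$.

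For the main direction I would use the structure of matrix convex sets together with the hypothesis $0 \in \mathcal F_1$. Fix $F' \in \mathcal F_n$ for arbitrary $n$; I want to reduce the inequality $\sum_i X_i \otimes F_i' \leq I_{dn}$ to the level-$d$ hypothesis. The key step is a compression/UCP argument: test the operator inequality against an arbitrary unit vector, or more precisely against rank-one data $v \in \mathbb C^d \otimes \mathbb C^n$. Writing the Schmidt-type decomposition of $v$ across the $\mathbb C^d$ and $\mathbb C^n$ factors, one produces an isometry (or more generally a contraction) $V : \mathbb C^d \to \mathbb C^n$ such that $\langle v, (X_i \otimes F_i') v\rangle$ can be rewritten in terms of $V^* F_i' V$. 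Now $V^* F_i' V$ need not itself lie in $\mathcal F_d$ because $V$ is only a contraction, not an isometry; this is where $0 \in \mathcal F_1$ (hence $0 \in \mathcal F_d$ by direct sums and UCP maps) enters. One completes the contraction $V$ to a genuine unital completely positive map $\Psi : \mathcal M(\mathbb C)_n \to \mathcal M(\mathbb C)_d$ by the standard trick $\Psi(\cdot) = V^*(\cdot)V + (I_d - V^*V)^{1/2}\,\tau(\cdot)\,(I_d - V^*V)^{1/2}$ for a suitable state $\tau$ — and then $(\Psi(F_1'),\dots,\Psi(F_g')) = (V^*F_1'V,\dots) + (\dots)\cdot 0$-type corrections lies in $\mathcal F_d$ by matrix convexity, with the ``$0$'' component absorbing the defect. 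Applying the level-$d$ hypothesis to this element of $\mathcal F_d$ and tracing back through the compression yields $\langle v,(\sum_i X_i \otimes F_i')v\rangle \leq \|v\|^2$, and since $v$ was arbitrary, $\sum_i X_i \otimes F_i' \leq I_{dn}$.

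The step I expect to be the main obstacle is the bookkeeping in the completion argument: arranging the dilation of the contraction $V$ to a UCP map so that the extra terms genuinely map into $\mathcal F_d$ (using $0 \in \mathcal F_d$ and closure under direct sums), and checking that the correction terms drop out of the relevant inner product rather than contributing a harmful positive amount. An alternative, possibly cleaner route avoids vectors entirely: observe that $\mathcal F_d = \Set{\Psi(F) : F \in \mathcal F_n,\ \Psi \in \mathrm{UCP}(\mathcal M(\mathbb C)_n, \mathcal M(\mathbb C)_d),\ n \in \mathbb N}$ once $0 \in \mathcal F_1$ is available (every element of $\mathcal F_n$ can be pushed down to level $d$ by a UCP map after padding with a direct summand at $0$), and then use the fact that $(\mathrm{id}_d \otimes \Psi)(\sum_i X_i \otimes F_i) = \sum_i X_i \otimes \Psi(F_i)$ together with the fact that $\mathrm{id}_d \otimes \Psi$ is positive. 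Either way the conceptual content is the same: matrix convexity plus the presence of $0$ lets one transfer the defining inequality between levels, collapsing the ``all $n$'' quantifier to ``$n = d$''.
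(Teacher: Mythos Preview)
Your approach is essentially the paper's: test the inequality against an arbitrary unit vector $\psi \in \mathbb C^d \otimes \mathbb C^n$, use its Schmidt decomposition to produce a map $V:\mathbb C^d\to\mathbb C^n$, and pull the problem down to level $d$ via matrix convexity. Two points where the paper is cleaner than your plan:

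\begin{itemize}
\item You anticipate the ``main obstacle'' to be that $V$ is merely a contraction, forcing you to complete $V^*(\cdot)V$ to a UCP map and track correction terms. In the paper this obstacle never arises. One first disposes of the case $n<d$ directly: using $0\in\mathcal F_1$ (hence $0\in\mathcal F_{d-n}$ by direct sums), pad $F\in\mathcal F_n$ to $F\oplus 0\in\mathcal F_d$ and apply the level-$d$ hypothesis verbatim. For $n\ge d$, the Schmidt rank of $\psi$ is at most $d\le n$, so the orthonormal family $\{f_j\}$ on the $\mathbb C^n$ side can always be extended to have $d$ elements, making $V=\sum_{j=1}^d |f_j\rangle\langle g_j|$ a genuine isometry. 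Then $V^*(\cdot)V$ is already UCP, $(V^*F_1V,\dots,V^*F_gV)\in\mathcal F_d$, and the level-$d$ hypothesis finishes the argument with no corrections to chase. So the role of $0\in\mathcal F_1$ is only the $n<d$ padding, not any contraction-to-UCP dilation.

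\item Your ``alternative, possibly cleaner route'' has the implication running the wrong way. Positivity of $\mathrm{id}_d\otimes\Psi$ for $\Psi\in\mathrm{UCP}(\mathcal M_n,\mathcal M_d)$ lets you push the inequality $\sum_i X_i\otimes F_i'\le I_{dn}$ \emph{down} to $\sum_i X_i\otimes\Psi(F_i')\le I_{d^2}$, but you need the reverse: to deduce the level-$n$ inequality from the level-$d$ one. That reverse step is exactly what the vector/Schmidt argument supplies (one UCP map per test vector), so this ``alternative'' collapses back into the first approach rather than avoiding it.
\end{itemize}
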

\begin{proof}
Since  $X \in \mathcal F^{\bullet}_d$ if and only if
\begin{equation*}
    \sum_{i = 1}^g X_i \otimes F_i \leq I \qquad \forall F \in \mathcal F
\end{equation*}
by definition, the first direction follows. For the converse, let us consider $(F_1, \ldots F_g) \in \mathcal F_n$. We can assume that $n > d$, because we can embed smaller matrices in dimension $d$ by adding zeroes. Let $Z = I - \sum_{i = 1}^g X_i \otimes F_i$. Then, 
\begin{equation*}
    \sum_{i = 1}^g X_i \otimes F_i \leq I \qquad \iff \qquad \bra{\psi}Z\ket{\psi} \geq 0 \quad \forall \ket{\psi} \in \mathbb C^d \otimes \mathbb C^n, \braket{\psi|\psi} = 1.
\end{equation*}
Now, we can use the Schmidt decomposition of $\ket{\psi} = \sum_{i=1}^d \lambda_i \ket{g_i} \otimes \ket{f_i}$, where $\lambda_i \geq 0$, $\sum_{i = 1}^d \lambda_i^2 = 1$, $\{\ket{g_i}\}_{i \in [d]}$ is an orthonormal basis of $\mathbb C^d$ and $\{\ket{f_i}\}_{i \in [d]}$ is a set of orthonormal vectors in $\mathbb C^n$. We define
\begin{equation*}
    Q = \sum_{i=1}^d \lambda_i \dyad{g_i} \qquad \mathrm{and} \qquad V = \sum_{i=1}^d \dyad{f_i}{g_i},
\end{equation*}
where $V$ is an isometry. Thus, we can write
\begin{equation*}
    \bra{\psi}Z\ket{\psi} = \bra{\Omega} (Q \otimes V)^\ast Z (Q \otimes V) \ket{\Omega},
\end{equation*}
where $\ket{\Omega} = \sum_{i = 1}^d \ket{g_i} \otimes \ket{g_i}$ is an unnormalized maximally entangled state. Finally, we note that 
\begin{equation*}
    (I \otimes V^\ast) Z (I \otimes V) = I - \sum_{i = 1}^g X_i \otimes V^\ast F_i V
\end{equation*}
Since $\mathcal F$ is a matrix convex set, it is closed under UCP maps, thus $(V^\ast F_1 V, \ldots, V^\ast F_g V) \in \mathcal F_d$. By assumption, thus $   (I \otimes V^\ast) Z (I \otimes V) \geq 0$. As $\ket{\psi}$ was arbitrary, $ \sum_{i = 1}^g X_i \otimes F_i \leq I$, and as $(F_1, \ldots, F_g)$ was also arbitrary, the assertion follows.
\end{proof}
With that, it follows that although we cannot necessarily write $\mathcal W^{\mathrm{min}}_d(\mathcal C)$ in the form of $\{X \in (\mathcal M(\mathbb C)_d^{\mathrm{sa}})^g: \sum_{i = 1}^g X_i \otimes A_i \leq I\}$ for some $A \in (\mathcal M(\mathbb C)_d^{\mathrm{sa}})^g$, it can be written as the intersection of such sets:
\begin{cor}
let $\mathcal C \subset \mathbb R^g$ be a closed convex set containing $0$. Then,
\begin{equation*}
    \mathcal W_d^{\mathrm{min}}(\mathcal C) = \bigcap_{F \in \mathcal W^{\mathrm{max}}_d(\mathcal C^\circ)} \mathcal D_F(d),
\end{equation*}
where $\mathcal D_F(d) = \{X \in (\mathcal M(\mathbb C)_d^{\mathrm{sa}})^g: \sum_{i = 1}^g X_i \otimes F_i \leq I\}$.
\end{cor}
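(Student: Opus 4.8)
The plan is to chain together the two duality facts already established in the excerpt: the polar-dual identity of Lemma \ref{lem:mcs-dual}, which gives $\mathcal W^{\min}(\mathcal C) = \mathcal W^{\max}(\mathcal C^\circ)^\bullet$ when $0 \in \mathcal C$, and the preceding Proposition, which says that membership in a polar dual $\mathcal F^\bullet_d$ can be tested using only the $d$-th level $\mathcal F_d$ rather than all of $\mathcal F$. Combining these should directly yield the claimed intersection formula.

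First I would record that since $\mathcal C$ is closed, convex, and contains $0$, the bipolar theorem gives $\mathcal C^{\circ\circ} = \mathcal C$, and in particular $0 \in \mathcal C^\circ$ as well, so both matrix convex sets in play contain $0$ at level $1$ and Lemma \ref{lem:mcs-dual} applies to give $\mathcal W^{\min}(\mathcal C) = \mathcal W^{\min}(\mathcal C^{\circ\circ}) = \mathcal W^{\max}(\mathcal C^\circ)^\bullet$. Next I would apply the Proposition with $\mathcal F = \mathcal W^{\max}(\mathcal C^\circ)$ — which is legitimate because $0 \in \mathcal W^{\max}_1(\mathcal C^\circ) = \mathcal C^\circ$ — to conclude that $X \in \mathcal W^{\min}_d(\mathcal C) = \mathcal F^\bullet_d$ holds if and only if $\sum_{i=1}^g X_i \otimes F_i \leq I_{d^2}$ for every $(F_1,\dots,F_g) \in \mathcal W^{\max}_d(\mathcal C^\circ)$. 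Finally I would observe that this last condition is, by definition, precisely the statement that $X \in \mathcal D_F(d)$ for every such $F$, i.e.\ that $X \in \bigcap_{F \in \mathcal W^{\max}_d(\mathcal C^\circ)} \mathcal D_F(d)$, which establishes both inclusions at once and hence the equality.

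I expect this argument to be essentially a bookkeeping exercise with no serious obstacle, since all the analytic content is already packaged in the Proposition and in Lemma \ref{lem:mcs-dual}. The one point that needs a small remark is the justification that $0 \in \mathcal C^\circ$ so that the Proposition's hypothesis $0 \in \mathcal F_1$ is met — this follows because $\langle 0, x \rangle = 0 \le 1$ for all $x$, so $0$ lies in the polar of any set. A second minor point is the identity dimension in $\mathcal D_F(d)$: the Proposition writes $I_{d^2}$ for the ambient identity on $\mathbb C^d \otimes \mathbb C^d$, while $\mathcal D_F(d)$ is written with a bare $I$; these agree since for $X, F \in (\mathcal M(\mathbb C)_d^{\mathrm{sa}})^g$ the operator $\sum_i X_i \otimes F_i$ acts on $\mathbb C^d \otimes \mathbb C^d$, so no confusion arises.
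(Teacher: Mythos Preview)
Your proposal is correct and is precisely the intended argument: the paper states this result as a corollary without an explicit proof, because it follows immediately by combining Lemma \ref{lem:mcs-dual} (to identify $\mathcal W^{\min}(\mathcal C)$ with $\mathcal W^{\max}(\mathcal C^\circ)^\bullet$) with the preceding Proposition (to reduce the polar-dual membership test to the single level $d$). Your handling of the minor points --- the verification that $0 \in \mathcal C^\circ$ so the Proposition applies, and the identity-dimension bookkeeping --- is also apt.
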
 
This corollary shows that Definition \ref{def:Delta} agrees with the definition for the inclusion set used in the previous work \cite{bluhm2018joint, bluhm2020compatibility}. The proof technique is very similar to \cite[Lemma 5.2]{bluhm2018joint}.

\section{Compatibility and tensor norms} \label{sec:compatibility-norms}

In this section, we relate the notion of measurement incompatibility in quantum mechanics to a tensor norm defined on a real vector space.

The starting point of our investigation is the following optimization problem. 

\begin{defi}\label{def:c-norm}
For a tensor $A \in \mathbb R^g \otimes \mathcal M(\mathbb C)^{\mathrm{sa}}_d$, we define $\|A\|_{\mathrm{c}}$ to be the value of the following optimization problem:
\begin{align}
\label{eq:c-norm-optimization}
    \mathrm{minimize} \quad & \quad \lambda_{\mathrm{max}}\left(\sum_j H_j\right) \\
\nonumber
    \mathrm{such~that} \quad & \quad A = \sum_j z_j \otimes H_j \\
\nonumber
    & \quad \|z_j\|_{\infty} = 1 \qquad \forall j\\
\nonumber
    & \quad H_j \geq 0 \qquad \forall j\\
\nonumber
    & \quad z_j \in \mathbb R^g, H_j \in \mathcal M(\mathbb C)_d^{\mathrm{sa}}  \qquad \forall j
\end{align}
\end{defi}

In the optimization problem above, we assume that the sums are finite, but can have arbitrary length. For this reason, as well as due to the tensor product between the variables $z_j$ and $H_j$, the problem is not stated in the form of a semidefinite program (SDP). However, it can be put in SDP form, as follows. 
\begin{align}
\label{eq:c-norm-SDP}
    \mathrm{minimize} \quad & \quad \lambda\\
\nonumber
    \mathrm{such~that} \quad & \quad A = \sum_{l=1}^{2^g} \epsilon_l \otimes K_l \\
\nonumber
    & \quad \lambda I_d \geq \sum_{l=1}^{2^g} K_l\\
\nonumber
    & \quad K_l \geq 0 \qquad \forall l \in [2^g]\\
\nonumber
    & \quad K_l \in \mathcal M(\mathbb C)_d^{\mathrm{sa}}  \qquad \forall l \in [2^g]
\end{align}
where $\{\epsilon_l\}_{l=1}^g$ is an enumeration of $\{\pm 1\}^g$, the extreme points of the unit ball of of the $\ell_\infty^g$ norm. Note that in the formulation above, we have only $2^g$ positive semidefinite variables, and the problem is in SDP form. To show that the former problem can be reduced to the latter form, decompose each $z_j$ as a convex combination of the
extreme points $\epsilon_l$:
$$z_j = \sum_{l=1}^{2^g} \mu(l|j) \epsilon_l,$$
where $\mu(\cdot | \cdot)$ is a conditional probability distribution. We can write now
$$A = \sum_j z_j \otimes H_j = \sum_{l=1}^{2^g} \epsilon_l \otimes \underbrace{\sum_{j} \mu(l|j) H_j}_{=:K_l}.$$
Note that $K_l \geq 0$ and 
$$\sum_{l=1}^{2^g} K_l = \sum_{l=1}^{2^g} \sum_j \mu(l|j) H_j = \sum_j H_j,$$
so the values of the two optimization problems are the same. 

This optimization problem actually defines a norm on $A \in \mathbb R^g \otimes \mathcal M(\mathbb C)^{\mathrm{sa}}_d$.

\begin{lem}
For any $g$, $d \in \mathbb N$, $\|\cdot\|_{\mathrm{c}}$ is a norm on $\mathbb R^g \otimes \mathcal M(\mathbb C)^{\mathrm{sa}}_d$.
\end{lem}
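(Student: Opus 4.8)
The plan is to verify the three norm axioms: positive definiteness, absolute homogeneity, and the triangle inequality, working primarily with the SDP reformulation \eqref{eq:c-norm-SDP} since it makes the optimization domain easier to manipulate. First I would record that the feasible set of \eqref{eq:c-norm-SDP} is always nonempty for any $A \in \mathbb R^g \otimes \mathcal M(\mathbb C)^{\mathrm{sa}}_d$: writing $A = \sum_{i=1}^g e_i \otimes A_i$, one can always split each $A_i$ into positive and negative parts and distribute these across the $\epsilon_l$, so $\|A\|_{\mathrm{c}} < \infty$. Also $\|A\|_{\mathrm{c}} \geq 0$ is immediate since $\lambda I_d \geq \sum_l K_l \geq 0$ forces $\lambda \geq 0$.

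For homogeneity, given $t \neq 0$ and a feasible point $(\lambda, \{K_l\})$ for $A$, I would produce a feasible point for $tA$ with objective $|t|\lambda$: for $t > 0$ take $(t\lambda, \{tK_l\})$; for $t < 0$ take $(|t|\lambda, \{|t| K_{\sigma(l)}\})$ where $\sigma$ is the involution on $\{\pm 1\}^g$ sending $\epsilon_l \mapsto -\epsilon_l$, using $tA = \sum_l \epsilon_l \otimes (|t| K_{\sigma(l)})$. This shows $\|tA\|_{\mathrm{c}} \leq |t|\, \|A\|_{\mathrm{c}}$, and applying it to $t^{-1}$ and $tA$ gives the reverse inequality; the case $t=0$ is handled by noting the zero decomposition gives $\|0\|_{\mathrm{c}} = 0$.

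For the triangle inequality, given feasible points $(\lambda, \{K_l\})$ for $A$ and $(\mu, \{L_l\})$ for $B$, the pair $(\lambda + \mu, \{K_l + L_l\})$ is feasible for $A + B$: indeed $A + B = \sum_l \epsilon_l \otimes (K_l + L_l)$, each $K_l + L_l \geq 0$, and $\sum_l (K_l + L_l) \leq \lambda I_d + \mu I_d = (\lambda+\mu) I_d$, so $\lambda_{\max}(\sum_l (K_l+L_l)) \leq \lambda + \mu$; taking infima gives $\|A+B\|_{\mathrm{c}} \leq \|A\|_{\mathrm{c}} + \|B\|_{\mathrm{c}}$.

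The main obstacle is definiteness: showing $\|A\|_{\mathrm{c}} = 0 \implies A = 0$. Here I would argue that the infimum is attained (the feasible set can be restricted to a compact set: the constraint $\sum_l K_l \leq \lambda I_d$ with $\lambda$ bounded near the infimum bounds each $K_l \geq 0$ in norm, and the affine constraint $A = \sum_l \epsilon_l \otimes K_l$ is closed), so $\|A\|_{\mathrm{c}} = 0$ yields an actual decomposition $A = \sum_l \epsilon_l \otimes K_l$ with $K_l \geq 0$ and $\sum_l K_l \leq 0$, forcing every $K_l = 0$ and hence $A = 0$. Alternatively, and perhaps more cleanly, I would lower-bound $\|A\|_{\mathrm{c}}$ by a manifestly positive quantity: from $A = \sum_j z_j \otimes H_j$ with $\|z_j\|_\infty = 1$, $H_j \geq 0$, one gets for each $i$ that $A_i = \sum_j (z_j)_i H_j$ with $|(z_j)_i| \leq 1$, so $-\sum_j H_j \leq A_i \leq \sum_j H_j$, whence $\|A_i\|_\infty \leq \lambda_{\max}(\sum_j H_j)$ and therefore $\|A\|_{\mathrm{c}} \geq \max_i \|A_i\|_\infty$, which is positive whenever $A \neq 0$. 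This bound simultaneously settles definiteness and confirms finiteness considerations are consistent.
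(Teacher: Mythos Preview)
Your proof is correct and follows essentially the same approach as the paper: direct verification of the norm axioms by manipulating feasible decompositions. You are in fact more careful than the paper, which only treats homogeneity for $c \in \mathbb R_+$ and tacitly assumes the infimum is attained when arguing definiteness; your sign-flip involution for $t<0$ and your lower bound $\|A\|_{\mathrm c} \geq \max_i \|A_i\|_\infty$ (which is precisely the injective-norm bound proved in the paper's next proposition) close those gaps.
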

\begin{proof}
The fact that $\|A\|_{\mathrm{c}} \iff A = 0$ follows readily since $\lambda_{\mathrm{max}}\left(\sum_j H_j\right) = 0$ if and only $H_j = 0$ for all $j$ for a sum of positive semidefinite matrices. $\|cA\|_{\mathrm{c}}= c\|A\|_{\mathrm{c}}$ for $c \in \mathbb R_+$ follows from the fact that $\lambda_{\max}(cH) = c\lambda_{\max}(H)$ for a positive semidefinite matrix $H$. Subadditivity follows from the fact that for valid decompositions of $A$, $B \in \mathbb R^g \otimes \mathcal M(\mathbb C)^{\mathrm{sa}}_d$, their sum is a valid decomposition of $A + B$ and the fact that the operator norm is subadditive.
\end{proof}
Moreover, we can show that $\|\cdot\|_{\mathrm{c}}$ is a reasonable crossnorm, when endowing the $\mathbb R^g$ and $\mathcal M(\mathbb C)^{\mathrm{sa}}_d$ with their respective $\|\cdot \|_\infty$ Banach space norms.
\begin{prop}
For any $g$, $d \in \mathbb N$, $\|\cdot\|_{\mathrm{c}}$ is a reasonable cross norm on $\ell_\infty^g \otimes S_\infty^d$.
\end{prop}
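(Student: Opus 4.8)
The plan is to verify the two defining properties of a reasonable crossnorm directly from the optimization problem in Definition~\ref{def:c-norm}, using the characterization from Proposition~\ref{prop:reasonable-cross-norms}(a): it suffices to show that $\|\cdot\|_{\epsilon} \leq \|\cdot\|_{\mathrm{c}} \leq \|\cdot\|_{\pi}$ on $\ell_\infty^g \otimes S_\infty^d$, where the injective and projective norms are taken with respect to the $\ell_\infty^g$ norm on $\mathbb R^g$ and the Schatten $\infty$-norm on $\mathcal M(\mathbb C)_d^{\mathrm{sa}}$.

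First I would prove the upper bound $\|A\|_{\mathrm{c}} \leq \|A\|_{\ell_\infty^g \otimes_\pi S_\infty^d}$. Given any decomposition $A = \sum_i x_i \otimes M_i$, I would rescale each term so that $\|x_i\|_\infty = 1$: write $x_i \otimes M_i = (x_i/\|x_i\|_\infty) \otimes (\|x_i\|_\infty M_i)$, which is harmless when $x_i \neq 0$ and whose term we may drop when $x_i = 0$. Then I must deal with the constraint $H_j \geq 0$, which is not present in the projective decomposition; the trick is to split each Hermitian $M_i$ into its positive and negative parts, $\|x_i\|_\infty M_i = M_i^+ - M_i^-$, and absorb the sign into the $\mathbb R^g$ factor by replacing $x_i/\|x_i\|_\infty$ with $-x_i/\|x_i\|_\infty$ (still a unit vector for $\|\cdot\|_\infty$). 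This produces a feasible point of \eqref{eq:c-norm-optimization} with objective value $\lambda_{\max}(\sum_j H_j) \leq \|\sum_j H_j\|_\infty \leq \sum_j \|H_j\|_\infty = \sum_i (\|M_i^+\|_\infty + \|M_i^-\|_\infty) \cdot 1$, and since $\|M_i^+\|_\infty, \|M_i^-\|_\infty \leq \|M_i^+ - M_i^-\|_\infty$ wait—that last inequality is false in general, so instead I would bound $\|M_i^+\|_\infty + \|M_i^-\|_\infty = \max(\lambda_{\max}(M_i),0) + \max(-\lambda_{\min}(M_i),0)$, which is at most $2\|M_i\|_\infty$; this gives $\|A\|_{\mathrm{c}} \leq 2\|A\|_\pi$, not quite what is needed. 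To recover the sharp constant I would instead observe that one can choose the splitting more carefully, or—cleaner—simply recall that by Proposition~\ref{prop:reasonable-cross-norms}(a) it is enough to have \emph{any} crossnorm-type bounds plus direct verification of the two crossnorm axioms, so I will fall back to checking those axioms directly rather than chasing the optimal projective comparison.

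So the main line of argument is: \textbf{(1)} For a pure tensor $x \otimes M$, take the single-term decomposition $x \otimes M = (x/\|x\|_\infty) \otimes (\|x\|_\infty\, M)$ if $M \geq 0$, or split $M = M^+ - M^-$ using both signs $\pm x/\|x\|_\infty$ as above; optimizing shows $\|x \otimes M\|_{\mathrm{c}} \leq \|x\|_\infty \|M\|_\infty$, giving crossnorm axiom~(1). \textbf{(2)} For the dual axiom, take $\phi \in (\ell_\infty^g)^* = \ell_1^g$ and $\psi \in (S_\infty^d)^* = S_1^d$; I must show $|(\phi \otimes \psi)(A)| \leq \|\phi\|_1 \|\psi\|_1 \|A\|_{\mathrm{c}}$ for all $A$. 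Taking an optimal decomposition $A = \sum_j z_j \otimes H_j$ with $\|z_j\|_\infty = 1$, $H_j \geq 0$, I compute $(\phi \otimes \psi)(A) = \sum_j \langle \phi, z_j\rangle \operatorname{Tr}[\psi H_j]$, bound $|\langle \phi, z_j\rangle| \leq \|\phi\|_1 \|z_j\|_\infty = \|\phi\|_1$, and use $H_j \geq 0$ together with $|\operatorname{Tr}[\psi H_j]| \leq \|\psi\|_\infty \operatorname{Tr}[H_j] \le \|\psi\|_1 \operatorname{Tr}[H_j]$—wait, I actually want the $S_1$ norm of $\psi$ on the left and the trace of $H_j$ controlled by $\lambda_{\max}(\sum_j H_j)$; here I use that $\operatorname{Tr}[H_j] = \operatorname{Tr}[H_j]$ and $\sum_j \operatorname{Tr}[H_j] = \operatorname{Tr}[\sum_j H_j] \le d\,\lambda_{\max}(\sum_j H_j)$. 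This introduces a factor of $d$; to remove it, the correct estimate is to write $|\operatorname{Tr}[\psi H_j]| \le \lambda_{\max}(H_j)\|\psi\|_1 \le \lambda_{\max}(\sum_k H_k)\|\psi\|_1$ since each $H_j \ge 0$ and $H_j \le \sum_k H_k$, and then $\sum_j |\langle\phi,z_j\rangle|\,|\operatorname{Tr}[\psi H_j]|$ still has the number of terms floating around. I expect this bookkeeping to be the crux, and the resolution is to first pass to the SDP form \eqref{eq:c-norm-SDP} with exactly $2^g$ terms $K_l \ge 0$ summing to something $\le \lambda I$, so that $(\phi\otimes\psi)(A) = \sum_l \langle\phi,\epsilon_l\rangle \operatorname{Tr}[\psi K_l]$ with $|\langle\phi,\epsilon_l\rangle| = |\sum_i \pm\phi_i| \le \|\phi\|_1$; then, after splitting $\psi = \psi^+ - \psi^-$ and using $\operatorname{Tr}[\psi^\pm K_l] \le \|\psi^\pm\|_\infty$ wait no—$\operatorname{Tr}[\psi^\pm K_l] \le \lambda_{\max}(\psi^\pm)\operatorname{Tr}[K_l]$, and $\sum_l \operatorname{Tr}[K_l] = \operatorname{Tr}[\sum_l K_l]\le\lambda d$. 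This persistent factor of $d$ strongly suggests I have the wrong Banach space on the matrix side, and the right reading is that $S_\infty^d$ comparisons should go through $\operatorname{Tr}[\psi H_j]\le\|\psi\|_{S_1^d}\lambda_{\max}(H_j)$ with no trace—which is exactly the inequality $|\operatorname{Tr}[\psi H]|\le \|\psi\|_1\|H\|_\infty = \|\psi\|_1\lambda_{\max}(H)$ for $H\ge0$, valid term by term—so the real obstruction is bounding $\sum_j |\langle\phi,z_j\rangle|\lambda_{\max}(H_j)$ by $\|\phi\|_1\|\psi\|_1\lambda_{\max}(\sum_j H_j)$, which fails without extra structure. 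I therefore expect the clean proof to instead verify the crossnorm axioms by first establishing $\|\cdot\|_{\epsilon}\le\|\cdot\|_{\mathrm c}$ via a direct dual-pairing computation and $\|\cdot\|_{\mathrm c}\le\|\cdot\|_\pi$ via the splitting-into-positive-and-negative-parts trick with a careful choice making the constant~$1$, and then invoke Proposition~\ref{prop:reasonable-cross-norms}(a).

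\textbf{Main obstacle.} The genuinely delicate point is getting \emph{sharp} constants: the naive positive/negative decomposition of a Hermitian matrix loses a factor of $2$, and the naive trace estimates on the dual side lose a factor of $d$. The key observation that makes it work is that in the optimization \eqref{eq:c-norm-optimization} the objective is $\lambda_{\max}(\sum_j H_j)$, an operator-norm-type quantity, not a trace, so the matching dual object is $\psi \in S_1^d$ paired via $|\operatorname{Tr}[\psi H]| \le \|\psi\|_{S_1^d}\,\lambda_{\max}(H)$ for $H\ge 0$; combined with the fact that a pure tensor admits a one-term (or two-term, signed) feasible decomposition realizing the product of norms, both crossnorm axioms follow with constant~$1$, and Proposition~\ref{prop:reasonable-cross-norms} finishes the proof.
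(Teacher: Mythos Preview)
Your overall strategy---show $\|\cdot\|_\epsilon \leq \|\cdot\|_{\mathrm c} \leq \|\cdot\|_\pi$ and invoke Proposition~\ref{prop:reasonable-cross-norms}(a)---is exactly what the paper does, but you never actually close either inequality; the proposal is a sequence of attempts that each pick up a spurious constant and then retreat. Two concrete observations, both present in the paper's proof, dissolve your obstacles.

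\medskip

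\textbf{Upper bound $\|A\|_{\mathrm c}\le\|A\|_\pi$.} After normalizing to $\|z_i\|_\infty=1$ and splitting $Y_i=Y_i^+-Y_i^-$, do \emph{not} apply the triangle inequality to $\|Y_i^+\|_\infty+\|Y_i^-\|_\infty$ separately. The feasible point you built has $\sum_j H_j=\sum_i(Y_i^++Y_i^-)=\sum_i|Y_i|$, and for Hermitian $Y$ one has $\||Y|\|_\infty=\|Y\|_\infty$. Hence
\[
\lambda_{\max}\Big(\sum_j H_j\Big)=\Big\|\sum_i|Y_i|\Big\|_\infty\le\sum_i\big\||Y_i|\big\|_\infty=\sum_i\|Y_i\|_\infty,
\]
with no factor of $2$. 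This is exactly the inequality the paper writes.

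\medskip

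\textbf{Lower bound $\|A\|_\epsilon\le\|A\|_{\mathrm c}$ (equivalently your axiom~(2)).} Your repeated attempts all fail for the same reason: you estimate $\sum_j|\phi(z_j)|\,|\psi(H_j)|$ term by term, which forces you to control either $\sum_j\lambda_{\max}(H_j)$ or $\sum_j\operatorname{Tr}H_j$, neither of which is bounded by $\lambda_{\max}(\sum_j H_j)$. The fix is to apply $\psi$ \emph{once} to the full sum. With $a_j:=\phi(z_j)\in[-1,1]$ (for $\|\phi\|_{\ell_1^g}\le1$, $\|z_j\|_\infty=1$),
\[
\Big|\sum_j\phi(z_j)\,\psi(H_j)\Big|=\Big|\psi\Big(\sum_j a_j H_j\Big)\Big|\le\|\psi\|_{S_1^d}\,\Big\|\sum_j a_j H_j\Big\|_\infty.
\]
Now positivity of the $H_j$ gives the operator sandwich $-\sum_j H_j\le\sum_j a_j H_j\le\sum_j H_j$, whence $\|\sum_j a_j H_j\|_\infty\le\lambda_{\max}(\sum_j H_j)=\|A\|_{\mathrm c}$. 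This is the single step you kept circling but never landed on; once you have it, the injective bound is immediate and no factor of $d$ ever appears.
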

\begin{proof}
We will show that $\|\cdot\|_{\epsilon} \leq \|\cdot\|_{\mathrm{c}} \leq \|\cdot\|_{\pi}$. The assertion will then follow from Proposition \ref{prop:reasonable-cross-norms}. Let $A \in \mathbb R^g \otimes \mathcal M(\mathbb C)^{\mathrm{sa}}_d$. Then,
\begin{equation*}
     \norm{A}_{\pi}= \inf \left\{\sum_i \norm{Y_i}_{\infty}: A = \sum_{i} z_i \otimes Y_i,~\|z_i\|_{\infty} = 1~\forall i  \right\}.
\end{equation*}
Since every Hermitian matrix $Y_i$ can be decomposed into a positive and a negative part as $Y_i = Y_i^+ - Y_i^-$, with $Y_i^+ Y_i^- = 0$, $Y_i^\pm \geq 0$, we can use
\begin{equation*}
    \lambda_{\max}\Big(\sum_i(Y_i^+ + Y_i^-)\Big) = \Big\| \sum_i |Y_i| \Big\|_{\infty} \leq \sum_i \| Y_i \|_{\infty} 
\end{equation*}
to infer $\|A\|_{\mathrm{c}} \leq \|A\|_{\pi}$. Now consider 
\begin{equation*}
          \norm{A}_{\epsilon}:= \sup \Big\{\big| \sum_j \phi(z_j) \psi(H_j)\big|: \norm{\phi}_{\ell_1^g} \leq 1, \norm{\psi}_{S_1^d} \leq 1   \Big\},
\end{equation*}
where $z_j$, $H_j$ define the decomposition achieving $\|A\|_{\mathrm{c}}$. Then, as $\|z_j\|_{\infty} = 1$,
\begin{equation*}
    \Big| \sum_j \phi(z_j) \psi(H_j)\Big| \leq \Big\|\sum_j \phi(z_j) H_j\Big\|_{\infty} = \Big\|\sum_j a_j H_j\Big\|_{\infty},
\end{equation*}
where $a_j \in  [-1,1]$. Now, as 
\begin{equation*}
    -\sum_{j:a_j < 0} H_j \leq \sum_j a_j H_j \leq \sum_{j:a_j > 0} H_j,
\end{equation*}
it holds that $\norm{A}_{\epsilon} \leq \|\sum_j H_j\|_{\infty}=\|A\|_{\mathrm{c}} $.
\end{proof}
Now we can establish a link between reasonable crossnorms and measurement compatibility: 
\begin{thm} \label{thm:comp-norm}
Let 
\begin{equation*}
    A = \sum_{j=1}^g e_j \otimes (2E_j - I).
\end{equation*}
Then,
\begin{enumerate}
    \item $\|A\|_{\ell^g_\infty \otimes_\epsilon S^d_\infty} \leq 1$ if and only if $\{E_j\}_{j  \in [g]}$ is a collection of effects.
    \item $\|A\|_{\mathrm{c}} \leq 1$ if and only if $\{E_j\}_{j  \in [g]}$ is a collection of \emph{compatible} effects.
\end{enumerate}
\end{thm}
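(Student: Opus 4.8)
The plan is to prove the two equivalences separately: statement (1) via the explicit formula for the injective norm, and statement (2) via the semidefinite reformulation \eqref{eq:c-norm-SDP} of $\|\cdot\|_{\mathrm c}$. Throughout I write $A = \sum_{j=1}^g e_j \otimes H_j$ with $H_j = 2E_j - I$.

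For (1), I would apply the definition of the injective tensor norm with $(\ell_\infty^g)^\ast = \ell_1^g$ and $(S_\infty^d)^\ast = S_1^d$, and use that $\sup\{|\sum_j \phi_j c_j| : \|\phi\|_1 \le 1\} = \max_j |c_j|$, to get
\[
\|A\|_{\ell_\infty^g \otimes_\epsilon S_\infty^d} = \sup_{\|\sigma\|_1 \le 1}\ \max_{j \in [g]} |\mathrm{Tr}[\sigma H_j]| = \max_{j\in[g]} \|H_j\|_\infty
\]
(this is the standard identification $\ell_\infty^g \otimes_\epsilon S_\infty^d = \ell_\infty^g(S_\infty^d)$). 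Hence $\|A\|_\epsilon \le 1$ iff $\|2E_j - I\|_\infty \le 1$ for every $j$, i.e.\ iff $0 \le E_j \le I$ for every $j$, which is exactly the assertion that the $E_j$ are effects.

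For (2), I would first record a dictionary between joint POVMs and sign-indexed decompositions. Identifying $\epsilon \in \{\pm1\}^g$ with the outcome tuple $(j_1,\dots,j_g) \in [2]^g$ via $\epsilon_i = +1 \Leftrightarrow j_i = 1$, a family $\{G_\epsilon\}_{\epsilon \in \{\pm1\}^g}$ with $G_\epsilon \ge 0$ and $\sum_\epsilon G_\epsilon = I$ is a joint POVM for the dichotomic measurements $\{E_i, I-E_i\}$, $i \in [g]$, in the sense of Definition \ref{def:jointPOVM}, if and only if $E_i = \sum_{\epsilon:\epsilon_i=+1} G_\epsilon$ for all $i$; and, using $\sum_\epsilon G_\epsilon = I$, this is equivalent to $2E_i - I = \sum_\epsilon \epsilon_i G_\epsilon$ for all $i$, that is, to $A = \sum_\epsilon \epsilon \otimes G_\epsilon$ (writing $\epsilon$ also for the vector $\sum_i \epsilon_i e_i \in \mathbb R^g$, which has $\|\epsilon\|_\infty = 1$). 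Granting this, the ``if'' direction is immediate: a joint POVM $\{G_\epsilon\}$ of a compatible family $\{E_i\}$ gives, via $A = \sum_\epsilon \epsilon \otimes G_\epsilon$, an admissible point of \eqref{eq:c-norm-optimization} with objective value $\lambda_{\max}(\sum_\epsilon G_\epsilon) = \lambda_{\max}(I) = 1$, so $\|A\|_{\mathrm c} \le 1$.

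For the ``only if'' direction I would invoke the SDP form \eqref{eq:c-norm-SDP}: $\|A\|_{\mathrm c} \le 1$ produces matrices $K_l \ge 0$, $l \in [2^g]$, with $A = \sum_l \epsilon_l \otimes K_l$ and slack $S := I - \sum_l K_l \ge 0$. The one genuine subtlety is that a POVM requires $\sum_l G_{\epsilon_l} = I$ exactly, while the SDP only gives $\sum_l K_l \le I$; I would repair this by absorbing the slack uniformly, setting $G_{\epsilon_l} := K_l + 2^{-g} S \ge 0$. Then $\sum_l G_{\epsilon_l} = I$, and for each coordinate $i$ one has $\sum_l (\epsilon_l)_i G_{\epsilon_l} = \sum_l (\epsilon_l)_i K_l + 2^{-g}\big(\sum_l (\epsilon_l)_i\big) S = \sum_l (\epsilon_l)_i K_l = 2E_i - I$, since $\sum_l (\epsilon_l)_i = 0$ (each coordinate is $+1$ on exactly half of the $2^g$ sign vectors). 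By the dictionary above, $\{G_{\epsilon_l}\}$ is a joint POVM for $\{E_i, I-E_i\}$, so the $E_i$ form a compatible collection of effects (in particular effects, consistently with $\|A\|_\epsilon \le \|A\|_{\mathrm c} \le 1$ and part (1)). Everything here is routine once the dictionary is set up; the only substantive point is this normalization fix, which works precisely because the extreme points $\epsilon_l$ of $B_{\ell_\infty^g}$ average to zero, so that replacing $K_l$ by $K_l + 2^{-g}S$ alters the total mass but not the coordinate-wise signed sums encoding $A$.
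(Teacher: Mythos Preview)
Your proof is correct and follows essentially the same route as the paper. The only differences are organizational: for part (2) the paper works with the general optimization problem \eqref{eq:c-norm-optimization} and the post-processing characterization of compatibility (Lemma~\ref{lem:post-processing}), absorbing the slack by appending a new term with $z_j=0$ (after relaxing $\|z_j\|_\infty=1$ to $\le 1$), whereas you work with the SDP reformulation \eqref{eq:c-norm-SDP} and the marginal definition (Definition~\ref{def:jointPOVM}), absorbing the slack by distributing $2^{-g}S$ uniformly over the $K_l$ and exploiting $\sum_l(\epsilon_l)_i=0$. Your slack fix is arguably cleaner, and indeed the paper records precisely your SDP/marginal viewpoint in the remark immediately following its proof; both arguments are the same idea in slightly different packaging.
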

\begin{proof}
Since the extreme points of $\ell^g_1$ are $\pm e_i$, $i \in [g]$, the first condition is equivalent to $\| 2 E_i - I\|_\infty \leq 1$ for all $i \in [g]$. This is easily seen to be equivalent to $0 \leq E_i \leq I$.

The expression for $\|A\|_{\mathrm{c}}$ implies in particular that for $z_j(i)$ the $i$-th coordinate in the standard basis, 
\begin{equation} \label{eq:effects}
    2 E_i - I = \sum_j z_j(i) H_j.
\end{equation}
Let $\|A\|_{\mathrm{c}} \leq 1$. Then, we can assume that the $H_j$ sum to the identity (possibly no longer considering the $\{H_j\}_j$ whose maximal eigenvalue is $\|A\|_{\mathrm{c}} \leq 1$), because otherwise we could just add $I-\sum_j H_j$ and assign $z_j = 0$ for that operator. In this case, it is easy to see that Eq.~\eqref{eq:effects} is equivalent to 
\begin{equation*}
    E_i = \sum_{j} \frac{1 + z_j(i)}{2} H_j
\end{equation*}
and
\begin{equation*}
    I - E_i = \sum_{j} \frac{1 - z_j(i)}{2} H_j.
\end{equation*}
Thus, $\{H_j\}_{j}$ is a joint POVM from which the $E_i$ arise by classical post-processing of the outcomes with conditional probabilities $p(\pm|i,j)$, where $p(\pm|i,j) = \frac{1 \pm z_j(i)}{2}$. Note that $p(\pm|i,j) \in [0,1]$ since $\norm{z_j}_\infty = 1$. Thus, $\{E_i\}_{i \in g}$ is a collection of compatible effects. For the reverse implication, we can use the joint POVM as $H_j$ and build the $z_j$ from the conditional probabilities for classical post-processing as above. Note that relaxing the requirement $\|z_j\| = 1$ to $\|z_j\| \leq 1$ does not change the value of $\|A\|_{\mathrm{c}}$. This shows that $\|A\|_{\mathrm{c}} \leq 1$, since the $H_j$ sum to the identity.
\end{proof}

\begin{remark}
We can also show easily from this formulation as a tensor norm that the post-processing (Lemma \ref{lem:post-processing}) and marginal (Definition \ref{def:jointPOVM}) points of view for compatibility are equivalent. To this end, we start from the SDP formulation of the compatibility norm, and notice that

\begin{equation*}
        E_i = \sum_{l=1}^{2^g} \frac{1 + \epsilon_l(i)}{2} K_l = \sum_{l:\epsilon_l(i)=1} K_l
\end{equation*}
and
\begin{equation*}
    I - E_i = \sum_{l=1}^{2^g} \frac{1 - \epsilon_l(i)}{2} K_l= \sum_{l:\epsilon_l(i)=-1} K_l,
\end{equation*}
since clearly, $\frac{1 - \epsilon_l(i)}{2} \in \{0,1\}$. So we have shown that the $K_l$ form a joint POVM from which the effects $E_i$ arise as marginals.
\end{remark}

\begin{remark}
The above results also follow from Theorem 9.2 and Proposition 9.4 of \cite{Bluhm2020GPT}, which more generally deal with general probabilistic theories, of which quantum mechanics is a special case. We presented detailed proofs here in order to make the presentation more accessible to the reader interested only in quantum theory.
\end{remark}

We relate next the compatibility norm $\|\cdot\|_{\mathrm c}$ to matrix convex sets, providing an alternative point of view to Theorem \ref{thm:comp-norm}.

\begin{prop} \label{prop:min-and-max}
Let $g$, $d \in \mathbb N$ and let $\mathcal C$ be the unit ball of $\ell_\infty^g$. Then, the unit ball of $\|\cdot\|_{\mathrm{c}}$ in $\mathbb R^g \otimes \mathcal M(\mathbb C)_d^{\mathrm{sa}} $ is equal to $\mathcal W^{\min}_d(\mathcal C)$. Moreover, the unit ball of $\|\cdot\|_{\ell^g_\infty \otimes_\epsilon S^d_\infty}$ is equal to $\mathcal W^{\max}_d(\mathcal C)$. 
\end{prop}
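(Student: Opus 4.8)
The plan is to identify a tensor $A=\sum_{j=1}^g e_j\otimes A_j\in\mathbb R^g\otimes\mathcal M(\mathbb C)_d^{\mathrm{sa}}$ with the $g$-tuple $(A_1,\dots,A_g)\in(\mathcal M(\mathbb C)_d^{\mathrm{sa}})^g$ and to prove the two set equalities directly, namely $\|A\|_{\mathrm c}\le 1\iff A\in\mathcal W^{\min}_d(\mathcal C)$ and $\|A\|_{\ell^g_\infty\otimes_\epsilon S^d_\infty}\le 1\iff A\in\mathcal W^{\max}_d(\mathcal C)$, where $\mathcal C=B_{\ell^g_\infty}=[-1,1]^g$. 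Since $\mathcal C$ is compact, both $\mathcal W^{\min}_d(\mathcal C)$ and $\mathcal W^{\max}_d(\mathcal C)$ are closed, so matching them with the (closed) unit balls is the whole content.

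For the minimal set, unfold the definition: $A\in\mathcal W^{\min}_d(\mathcal C)$ means $A=\sum_j z_j\otimes X_j$ with $\|z_j\|_\infty\le 1$, $X_j\ge 0$, $\sum_j X_j=I$, whereas $\|A\|_{\mathrm c}\le 1$ asks for $A=\sum_j z_j\otimes H_j$ with $\|z_j\|_\infty=1$, $H_j\ge 0$, $\lambda_{\max}(\sum_j H_j)\le 1$. For ``$\Rightarrow$'' I would start from an optimal decomposition witnessing $\|A\|_{\mathrm c}\le 1$ (it exists because the feasible region of the SDP \eqref{eq:c-norm-SDP} intersected with $\{\lambda\le 1\}$ is compact), then append the term $z_0=0\in\mathcal C$, $H_0=I-\sum_j H_j\ge 0$; this forces the operators to sum to $I$ while all $z_j$ stay in $\mathcal C$, so $A\in\mathcal W^{\min}_d(\mathcal C)$. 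For ``$\Leftarrow$'' I would take a decomposition realizing $A\in\mathcal W^{\min}_d(\mathcal C)$, discard the terms with $z_j=0$ (they contribute $0$), and for the remaining ones rescale $z_j\mapsto z_j/\|z_j\|_\infty$ and $X_j\mapsto\|z_j\|_\infty X_j=:H_j$; then $\|z_j/\|z_j\|_\infty\|_\infty=1$, $H_j\ge 0$, and $\sum_j H_j\le\sum_j X_j=I$, hence $\lambda_{\max}(\sum_j H_j)\le 1$, i.e. $\|A\|_{\mathrm c}\le 1$. This is precisely the remark, already used in the proof of Theorem \ref{thm:comp-norm}, that relaxing $\|z_j\|_\infty=1$ to $\|z_j\|_\infty\le 1$ in Definition \ref{def:c-norm} does not change the value.

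For the maximal set I would first compute the injective norm explicitly. Using $(\ell^g_\infty)^\ast=\ell^g_1$ and $(S^d_\infty)^\ast=S^d_1$, the defining supremum is $\|A\|_{\ell^g_\infty\otimes_\epsilon S^d_\infty}=\sup\{|\mathrm{Tr}[\psi\sum_i\phi_iA_i]|:\|\phi\|_1\le1,\ \|\psi\|_1\le1\}$; performing the supremum over $\psi$ first gives $\sup_{\|\phi\|_1\le1}\|\sum_i\phi_iA_i\|_\infty$, and since $\phi\mapsto\|\sum_i\phi_iA_i\|_\infty$ is convex the supremum over the polytope $B_{\ell^g_1}$ is attained at an extreme point $\pm e_i$, so $\|A\|_{\ell^g_\infty\otimes_\epsilon S^d_\infty}=\max_{i\in[g]}\|A_i\|_\infty$. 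On the other side, a half-space $\{x:\langle c,x\rangle\le\alpha\}$ contains $\mathcal C=[-1,1]^g$ iff $\alpha\ge\|c\|_1$, so $A\in\mathcal W^{\max}_d(\mathcal C)$ iff $\sum_i c_iA_i\le\|c\|_1 I$ for all $c\in\mathbb R^g$; by homogeneity this is equivalent to $\sum_i c_iA_i\le I$ for all $\|c\|_1\le1$, and the same convexity/extreme-point argument reduces it to $c=\pm e_i$, i.e. $-I\le A_i\le I$, i.e. $\|A_i\|_\infty\le 1$ for all $i$. Thus both the unit ball of $\|\cdot\|_{\ell^g_\infty\otimes_\epsilon S^d_\infty}$ and $\mathcal W^{\max}_d(\mathcal C)$ equal $\{A:\max_i\|A_i\|_\infty\le1\}$.

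I expect the only genuine subtlety to lie in the minimal-set direction: one has to check that attaching the padding operator $I-\sum_j H_j$ is legitimate ($0\in\mathcal C$, which holds since $\mathcal C$ is an $\ell_\infty$ ball), that the infimum in Definition \ref{def:c-norm} is attained so that a bona fide decomposition exists when $\|A\|_{\mathrm c}=1$, and that the rescaling in the converse does not increase $\lambda_{\max}(\sum_j H_j)$ (true because it only shrinks the positive operators). Everything else is routine. As a remark, the second statement can alternatively be derived from the first via the $\epsilon$--$\pi$ duality of tensor norms combined with Lemma \ref{lem:mcs-dual} (the polar duality of $\mathcal W^{\min}$ and $\mathcal W^{\max}$), but the direct computation above is shorter and more transparent.
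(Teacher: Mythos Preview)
Your proposal is correct and follows essentially the same approach as the paper's proof: for the minimal set you make explicit the observation that relaxing $\|z_j\|_\infty=1$ to $\|z_j\|_\infty\le 1$ does not change the value of the optimization (via padding by $I-\sum_j H_j$ one way, rescaling the other), and for the maximal set you reduce both sides to the condition $-I\le A_i\le I$ for all $i$. The paper's proof states these two points in one sentence each; you have simply spelled out the details (including the attainment of the infimum, which the paper leaves implicit).
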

\begin{proof}
The first assertion follows immediately from the definition of $\|\cdot\|_{\mathrm{c}}$, since we can relax the constraint $\|z_j\|_\infty = 1$ to $\|z_j\|_\infty \leq 1$ without changing the value of the optimization problem. The second assertion follows, because for
\begin{equation*}
    A = \sum_{j = 1}^g e_j \otimes A_j,
\end{equation*}
$\|A\|_{\ell^g_\infty \otimes_\epsilon S^d_\infty} \leq 1$ if and only if $-I \leq A_j \leq I$ for all $j \in [g]$. The unit ball of $\ell_\infty^g$ is defined by hyperplanes $\{x \in \mathbb R^g: \mu x_i \leq 1\}$, $\mu \in \{\pm 1\}$, $i \in [g]$.
\end{proof}

Theorem \ref{thm:comp-norm} also allows us to identify the compatibility region from Definition \ref{def:Gamma} with the inclusion constant sets for the $\ell_\infty^g$ unit ball from Definition \ref{def:Delta}.

\begin{thm} \label{thm:gamma-is-cube}
Let $g$, $d \in \mathbb N$. Let $s \in [0,1]^g$. Then, $\{s_i E_i + (1-s_i)I/2\}_{i \in [g]}$ is a collection of compatible effects for all $g$-tuples $(E_i)_{i \in [g]} \in \mathrm{Eff}_d^g$, if and only if $s \in \Delta_{\square}(g,d)$. An equivalent way to phrase this is
\begin{equation*}
    \Gamma(g,d) = \Delta_{\square}(g,d).
\end{equation*}
\end{thm}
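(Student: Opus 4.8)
The plan is to unfold both sides of the claimed equality $\Gamma(g,d) = \Delta_{\square}(g,d)$ into statements about effects and then invoke the machinery already assembled, principally Theorem \ref{thm:comp-norm} and Proposition \ref{prop:min-and-max}. Recall from Definition \ref{def:Delta} that $s \in \Delta_{\square}(g,d)$ means $s \cdot \mathcal W^{\max}_d(\mathcal C) \subseteq \mathcal W^{\min}_d(\mathcal C)$, where $\mathcal C = B_{\ell_\infty^g}$. By Proposition \ref{prop:min-and-max}, $\mathcal W^{\max}_d(\mathcal C)$ is exactly the unit ball of $\|\cdot\|_{\ell_\infty^g \otimes_\epsilon S_\infty^d}$ and $\mathcal W^{\min}_d(\mathcal C)$ is exactly the unit ball of $\|\cdot\|_{\mathrm c}$. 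So the set inclusion is equivalent to the implication: whenever $\|\sum_j e_j \otimes A_j\|_{\ell_\infty^g \otimes_\epsilon S_\infty^d} \le 1$, one has $\|\sum_j e_j \otimes s_j A_j\|_{\mathrm c} \le 1$.

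Next I would translate this through the affine change of variables $A_j = 2E_j - I$. By part (1) of Theorem \ref{thm:comp-norm}, $\|\sum_j e_j \otimes (2E_j-I)\|_{\ell_\infty^g \otimes_\epsilon S_\infty^d} \le 1$ is the same as $(E_j)_{j\in[g]}$ being a tuple of effects, i.e.\ ranging over $\mathrm{Eff}_d^g$. For the conclusion side, I need to identify $\sum_j e_j \otimes s_j A_j$. Compute $s_j A_j = s_j(2E_j - I) = 2(s_j E_j) - s_j I = 2\big(s_j E_j + \tfrac{1-s_j}{2}I\big) - I$; that is, $s_j A_j = 2 E_j' - I$ where $E_j' := s_j E_j + (1-s_j)I/2$ is precisely the white-noise-smeared effect appearing in $\Gamma(g,d)$ (Definition \ref{def:Gamma} with $k_i = 2$). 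Then part (2) of Theorem \ref{thm:comp-norm} says $\|\sum_j e_j \otimes (2E_j' - I)\|_{\mathrm c} \le 1$ if and only if $\{E_j'\}_{j\in[g]}$ is a collection of compatible effects. Stringing these equivalences together, $s \in \Delta_{\square}(g,d)$ holds iff for every $(E_j)_{j\in[g]} \in \mathrm{Eff}_d^g$ the noisy effects $\{s_j E_j + (1-s_j)I/2\}_{j\in[g]}$ are compatible, which is exactly the defining condition of $\Gamma(g,d)$. One small bookkeeping point is to note, as done in the paragraph after Definition \ref{def:Gamma} and in Proposition \ref{prop:gamma-known-before}, that compatibility of the dichotomic effects $\{E_j'\}$ is the same as compatibility of the dichotomic POVMs $\{E_j', I-E_j'\}$, so the two formulations of $\Gamma(g,d)$ coincide; this is already the standing convention for dichotomic measurements in the paper.

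There is no real obstacle here: the statement is essentially a dictionary entry, and every ingredient — the norm/matrix-convex-set identifications in Proposition \ref{prop:min-and-max}, the effect and compatibility characterizations in Theorem \ref{thm:comp-norm}, and the convexity remarks ensuring everything is well-defined — has been established above. The only place one must be slightly careful is the affine bijection $E \leftrightarrow 2E-I$ and checking that scaling the $A_j$ by $s_j$ corresponds on the effect side exactly to white-noise mixing with weight $s_j$ (as opposed to some other trivial effect); the computation above confirms this, using that the trivial effect for a dichotomic POVM is $I/2$. I would therefore present the proof as a short chain of ``if and only if'' reductions, citing Proposition \ref{prop:min-and-max} for the first step and the two parts of Theorem \ref{thm:comp-norm} for the remaining two, with a one-line remark on the dichotomic-effects-versus-POVMs identification.
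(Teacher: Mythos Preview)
Your proposal is correct and follows essentially the same route as the paper's proof: both use Proposition \ref{prop:min-and-max} to identify $\mathcal W^{\max}_d(\mathcal C)$ and $\mathcal W^{\min}_d(\mathcal C)$ with the unit balls of the injective and compatibility norms, both invoke the two parts of Theorem \ref{thm:comp-norm}, and both hinge on the affine identity $s_j(2E_j-I)=2\big(s_jE_j+(1-s_j)I/2\big)-I$ to translate the scaling $s\cdot A$ into white-noise mixing. The only cosmetic difference is that the paper writes out the two implications separately, whereas you present a single chain of equivalences.
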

\begin{proof}
Let 
\begin{equation}
    A = \sum_{j = 1}^g e_j \otimes (2E_j - I) \label{eq:not-so-special-form}
\end{equation}
and 
\begin{equation*}
    A^\prime = \sum_{j = 1}^g e_j \otimes \left(2\left(s_j E_j + (1-s_j)\frac{I}{2}\right) - I\right). 
\end{equation*}
Then, $A^\prime = s \cdot A$, where the multiplication is understood component-wise. Now, by Theorem \ref{thm:comp-norm}, $\{s_i E_i + (1-s_i)I/2\}_{i \in [g]}$ is a collection of compatible effects if and only if $\|s \cdot A\|_{\mathrm{c}} \leq 1$. Thus, by Proposition \ref{prop:min-and-max}, $s \cdot A \in \mathcal W^{\min}_d(\mathcal C)$, where $\mathcal C$ is the unit ball of $\ell_\infty^g$. As $A \in \mathcal W^{\min}_d(\mathcal C)$ if $E_i \in \mathrm{Eff}_d$ for all $i \in [g]$ by Theorem \ref{thm:comp-norm} and Proposition \ref{prop:min-and-max}, we infer $s\cdot W^{\max}_d(\mathcal C) \subseteq W_d^{\min}(\mathcal C)$ and thus $s \in \Delta_{\square}(g,d)$. Conversely, if $s \in \Delta_{\square}(g,d)$, we can pick any $A \in \mathbb R^g \otimes \mathcal M(\mathbb C)_d^{\mathrm{sa}}$ and put it in the form of Eq.~\eqref{eq:not-so-special-form}. By Theorem \ref{thm:comp-norm} and Proposition \ref{prop:min-and-max}, $\|A\|_{\ell^g_\infty \otimes_\epsilon S^d_\infty} \leq 1$ implies that $\|s \cdot A\|_{\mathrm{c}} \leq 1$. Thus, $E_i \in \mathrm{Eff}_d$ for all $i \in [g]$ and $\{s_i E_i + (1-s_i)I/2\}_{i \in [g]}$ is a collection of compatible effects.
\end{proof}

\section{Incompatibility witnesses} \label{sec:witness-norms}
In the theory of entanglement, the notion of entanglement witnesses \cite[Section VI.B.3]{horodecki2009quantum} plays a crucial role, as it allows, given a bipartite quantum state, to certify its non-separability \cite{terhal2000bell}. Similar notions have been developed for detecting incompatibility of quantum measurements \cite{carmeli2019quantum,jencova2018incompatible,bluhm2020compatibility}, and have been generalized to general probabilistic theories \cite{kuramochi2020compact,Bluhm2020GPT}.

We have already seen that compatibility of measurements is related to studying tensors on $\mathbb R^g \otimes \mathcal M(\mathbb C)_d^{\mathrm{sa}}$, using reasonable crossnorms on $\ell^g_\infty \otimes S^d_{\infty}$. Thus, let us look at the duals of these spaces, which correspond again to the real vector space $\mathbb R^g \otimes \mathcal M(\mathbb C)_d^{\mathrm{sa}}$, but this time with reasonable crossnorms on $\ell^g_1 \otimes S^d_{1}$. 

Let us start by defining the set of \emph{effect witnesses} as the unit ball on $\ell^g_1 \otimes_\pi S^d_{1}$, which is the space dual to $\ell^g_\infty \otimes_\epsilon S^d_\infty$:
\begin{equation*}
    \mathcal E_d:= \{\varphi \in \mathbb R^g \otimes \mathcal M(\mathbb C)_d^{\mathrm{sa}}: \|\varphi\|_{\pi} \leq 1\} =  \Big\{\varphi \in \mathbb R^g \otimes \mathcal M(\mathbb C)_d^{\mathrm{sa}}: \sum_{i=1}^g\|X_i\|_{1} \leq 1\Big\}.
\end{equation*}
Here, we have written
\begin{equation*}
    \varphi = \sum_{i = 1}^g e_i \otimes X_i.
\end{equation*}
Since the projective and injective norm are dual to each other, it is straightforward to see that for
\begin{equation*}
    A = \sum_{i = 1}^g e_i \otimes (2 E_i - I),
\end{equation*}
the $E_i \in \mathcal M(\mathbb C)_d^{\mathrm{sa}}$ are effects if and only if 
\begin{equation*}
    \langle \phi, A \rangle = \Tr[\phi A] = \sum_{i = 1}^g \mathrm{Tr}[X_i(2E_i-I)] \leq 1
\end{equation*}
for all $\varphi \in \mathcal E_{d}$.

Moving now to compatible effects, one can compute the dual program to the SDP in Eq.~\eqref{eq:c-norm-SDP} giving the value of the compatibility norm $\|A\|_{\mathrm c}$:
\begin{align}
\label{eq:c-norm-dual-SDP}
    \mathrm{maximize} \quad & \quad \sum_{i=1}^g \mathrm{Tr}[\phi_i A_i] \\
\nonumber
    \mathrm{such~that} \quad & \quad \rho - \sum_{i=1}^g \epsilon_l(i) \phi_i \geq 0 \qquad \forall l \in [2^g] \\
\nonumber
    & \quad \Tr \rho = 1 \\
\nonumber
    &  \rho, \phi_i \in \mathcal M(\mathbb C)_d^{\mathrm{sa}}  \qquad \forall i \in [g].
\end{align}
Averaging the constraints, we can easily see that $\rho \geq 0$. Moreover, the equality constraint $\Tr \rho =1$ can be relaxed to $\Tr \rho \leq 1$. Since the dual SDP is strictly feasible (consider $\rho = I/(2d)$ and $\phi_i = 0$ for all $i \in [g]$), strong duality holds and the value of both SDPs is the same. Moreover, it is easy to see that the primal SDP is also strictly feasible, decomposing $A_i$ into positive and negative part, adding the identity if necessary to make both parts positive definite.

Now, let us consider $\|\cdot\|_{\mathrm{c*}}$, the dual norm of $\|\cdot\|_{\mathrm{c}}$. Since $\|\cdot\|_{\mathrm{c*}}$ is the dual norm of a reasonable crossnorm, it is a reasonable crossnorm itself. 

 \begin{prop}
 The norm $\|\cdot\|_{\mathrm{c}*}$, dual to the compatibility norm from Definition \ref{def:c-norm}, has the following expression: 
 $$\|\phi\|_{\mathrm c*} = \inf\Big\{ \Tr \rho : \forall \epsilon \in \{\pm 1\}^g, \, \rho \geq \sum_{i=1}^g \epsilon_i \phi_i\Big\}, \qquad \forall \phi = \sum_{i=1}^g e_i \otimes \phi_i \in \mathbb R^g \otimes \mathcal M(\mathbb C)_d^{\mathrm{sa}}.$$
 \end{prop}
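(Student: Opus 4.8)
The plan is to derive the stated formula for $\|\cdot\|_{\mathrm{c}*}$ directly from the dual SDP in Eq.~\eqref{eq:c-norm-dual-SDP}, exploiting the relationship between the dual norm and the support function of the unit ball. Concretely, by definition of the dual norm, $\|\phi\|_{\mathrm{c}*} = \sup\{\langle \phi, A\rangle : \|A\|_{\mathrm{c}} \leq 1\}$, where $\langle \phi, A\rangle = \sum_{i=1}^g \Tr[\phi_i A_i]$. Since the primal SDP in Eq.~\eqref{eq:c-norm-SDP} computing $\|A\|_{\mathrm{c}}$ is a genuine SDP and, as noted in the text, both it and its dual are strictly feasible, strong duality holds. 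First I would observe that $\|A\|_{\mathrm{c}} \le 1$ is exactly the feasibility-plus-value condition for the primal SDP with $\lambda = 1$; dually, maximizing $\langle\phi,A\rangle$ over this set amounts to running the dual SDP Eq.~\eqref{eq:c-norm-dual-SDP} but now \emph{also} optimizing over the data $A$. The cleanest route, though, is to note that $\|\phi\|_{\mathrm{c}*}$ is precisely the support function of the unit ball $\{A : \|A\|_{\mathrm{c}} \le 1\}$, which by Proposition~\ref{prop:min-and-max} equals $\mathcal W^{\min}_d(\mathcal C)$ with $\mathcal C$ the $\ell_\infty^g$ unit ball.

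The key computation is then: $\|\phi\|_{\mathrm{c}*} = \sup\{\langle \phi, A\rangle : A \in \mathcal W^{\min}_d(\mathcal C)\}$. Using the definition of $\mathcal W^{\min}_d(\mathcal C)$, every such $A$ has the form $A = \sum_j z_j \otimes X_j$ with $z_j \in \mathcal C$, $X_j \ge 0$, $\sum_j X_j = I$. Then $\langle \phi, A\rangle = \sum_j \langle\phi, z_j \otimes X_j\rangle = \sum_j \sum_i (z_j)_i \Tr[\phi_i X_j] = \sum_j \Tr\!\big[(\sum_i (z_j)_i \phi_i) X_j\big]$. Since $\|z_j\|_\infty \le 1$, the inner matrix $\sum_i (z_j)_i \phi_i$ is a convex combination of the $\sum_i \epsilon_i \phi_i$ over $\epsilon \in \{\pm 1\}^g$. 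So if $\rho$ is any Hermitian matrix with $\rho \ge \sum_i \epsilon_i \phi_i$ for all $\epsilon$, then $\rho \ge \sum_i (z_j)_i \phi_i$ for every $j$, hence $\langle\phi,A\rangle \le \sum_j \Tr[\rho X_j] = \Tr[\rho]$ using $\sum_j X_j = I$ and $X_j \ge 0$. This proves $\|\phi\|_{\mathrm{c}*} \le \inf\{\Tr\rho : \rho \ge \sum_i \epsilon_i\phi_i \ \forall \epsilon\}$.

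For the reverse inequality I would invoke strong duality of the SDP pair (equivalently, a direct separation/LP-duality argument). The infimum on the right-hand side is itself an SDP: minimize $\Tr\rho$ over $\rho \in \mathcal M(\mathbb C)_d^{\mathrm{sa}}$ subject to $\rho - \sum_i \epsilon_l(i)\phi_i \ge 0$ for all $l \in [2^g]$. This is exactly the dual SDP Eq.~\eqref{eq:c-norm-dual-SDP} with the $\phi_i$ held fixed at the given values and $A$ removed — wait, more precisely, it is the feasibility region of Eq.~\eqref{eq:c-norm-dual-SDP} with objective $\Tr\rho$ instead of $\sum_i \Tr[\phi_i A_i]$. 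The matching primal is: minimize $\sum_l \|K_l\|$... — concretely, I would write down the Lagrangian dual of this $\rho$-SDP directly and check it returns $\sup\{\sum_i \Tr[\phi_i X_i] : X_i \text{ arising from a joint POVM as in } \mathcal W^{\min}_d(\mathcal C)\}$, which is $\|\phi\|_{\mathrm{c}*}$. Strict feasibility on the $\rho$-side (take $\rho = tI$ for $t$ large) guarantees no duality gap.

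The main obstacle, and the only place care is needed, is making the reverse inequality fully rigorous: one must confirm that the SDP computing $\inf\{\Tr\rho : \rho \ge \sum_i\epsilon_i\phi_i \ \forall\epsilon\}$ is in precise Lagrangian duality with the support function of $\mathcal W^{\min}_d(\mathcal C)$, and that the relevant strict-feasibility (Slater) condition holds so that the optimal values coincide and are attained. Once the strong-duality bookkeeping is set up correctly — identifying the dual variables of the constraints $\rho \ge \sum_i \epsilon_l(i)\phi_i$ with positive semidefinite matrices $K_l$ and recognizing $\sum_l K_l = I$ (from the $\Tr\rho$ objective) together with $\sum_l \epsilon_l \otimes K_l$ reconstructing a point of $\mathcal W^{\min}_d(\mathcal C)$ paired against $\phi$ — the equality $\|\phi\|_{\mathrm{c}*} = \inf\{\Tr\rho : \rho \ge \sum_i\epsilon_i\phi_i \ \forall \epsilon\}$ follows. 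The rest is the routine verification already sketched in the first direction.
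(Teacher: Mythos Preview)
Your argument is correct. The easy direction (the inequality $\|\phi\|_{\mathrm c*}\le\inf\{\Tr\rho:\ldots\}$) is handled cleanly, and your plan for the reverse inequality via the Lagrangian dual of the $\rho$-minimization is exactly right: the dual variables $K_l\ge 0$ satisfy $\sum_l K_l=I$ from the stationarity condition in $\rho$, and the resulting objective $\sum_i\Tr[\phi_i\sum_l\epsilon_l(i)K_l]$ is precisely $\langle\phi,A\rangle$ for $A=\sum_l\epsilon_l\otimes K_l\in\mathcal W^{\min}_d(B_{\ell_\infty^g})$. Strict feasibility of the $\rho$-side is immediate, so Slater gives the equality.

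The paper's own proof takes a slightly different route. Rather than computing $\|\phi\|_{\mathrm c*}$ as the support function of $\mathcal W^{\min}_d$ and then running a fresh SDP duality, the paper simply observes that the feasible set of the already-derived dual SDP \eqref{eq:c-norm-dual-SDP} (with $\Tr\rho=1$ relaxed to $\Tr\rho\le 1$) is exactly the sub-level set $\{\phi:N(\phi)\le 1\}$ of the candidate expression $N$. Strong duality between \eqref{eq:c-norm-SDP} and \eqref{eq:c-norm-dual-SDP} then yields $\|A\|_{\mathrm c}=\sup\{\langle\phi,A\rangle:N(\phi)\le 1\}$, i.e.\ $\|\cdot\|_{\mathrm c}=N^*$, and the bidual theorem gives $N=\|\cdot\|_{\mathrm c*}$. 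In other words, the paper passes through $\|A\|_{\mathrm c}$ and invokes the bipolar; you pass through $\|\phi\|_{\mathrm c*}$ directly and avoid the bipolar, at the price of writing out one more (closely related) SDP duality. Your version is more self-contained and makes the pairing $K_l\leftrightarrow\mathcal W^{\min}_d$ explicit; the paper's is shorter because it recycles \eqref{eq:c-norm-dual-SDP}. Both are the same strong-duality computation viewed from opposite sides.
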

 \begin{proof}
 Since the semidefinite programs in Eq.~\eqref{eq:c-norm-SDP} and Eq.~\eqref{eq:c-norm-dual-SDP} are dual and Eq.~\eqref{eq:c-norm-dual-SDP} is strictly feasible as noted above, they have the same value by Slater's condition, hence
 $$\|A\|_{\mathrm c} = \sup\{ \langle \phi , A \rangle \, : \, \|\phi\|_{\mathrm c*} \leq 1\},$$
 using the definition of the $\|\cdot\|_{\mathrm c*}$ quantity from the statement. But this is precisely the definition of norm duality, proving the claim.  
 \end{proof}

From the proposition above, it is clear that the unit ball of $\|\cdot\|_{\mathrm{c*}}$ is the set
\begin{equation}\label{eq:def-Id}
    \mathcal I_d := \Big\{\varphi = \sum_{i=1}^g e_i \otimes \varphi_i: \exists \rho \in \mathcal S(\mathbb C^d)~\mathrm{s.t.}~\rho - \sum_i \epsilon_i \varphi_i \geq 0~\forall \epsilon \in \{\pm1\}\Big\}.
\end{equation}
 Looking at the dual SDP, we see that the unit ball of $\|\cdot\|_{\mathrm{c}}$ is the polar of this set. We note that the above set is convex, closed and contains $0$, which concludes the proof.

 We call the set ${\mathcal I_d}$ the set of \emph{incompatibility witnesses}, because by duality $A \in \mathbb R^g \otimes \mathcal M(\mathbb C)^{\mathrm{sa}}_d$ corresponds to compatible effects if and only if
 \begin{equation*}
     \mathrm{Tr}[\varphi A] \leq 1 \qquad \forall \varphi \in {\mathcal I_d}.
 \end{equation*}
 Similar to entanglement witnesses, we are interested in the set of incompatibility witnesses which actually witness incompatibility for some collection of effects. We call these the \emph{strict incompatibility witnesses}, i.e.
 \begin{equation*}
     \mathcal{SI}_d:= \mathcal I_d \setminus \mathcal E_d.
 \end{equation*}
 In other words, $\phi \in \mathcal{SI}_d$ if, for all collection of \emph{compatible} effects $A$, $\langle \phi, A \rangle \leq 1$ but there exists a collection of (incompatible) effects $B$ such that $\langle \phi, B \rangle > 1$.
We sum up the discussion up to this point in the following proposition. 

\begin{prop}
A $g$ tuple of Hermitian, $d \times d$ complex matrices $\phi \in \mathbb R^g \otimes \mathcal M(\mathbb C)^{\mathrm{sa}}_d$ is called an \emph{effect witness} if, for all $g$-tuple of measurement operators $A$, $\langle \phi, A \rangle = \Tr[\phi A] \leq 1$. The set of effect witnesses, denoted by $\mathcal E_d$, is the unit ball of the $\ell^g_1 \otimes_\pi S_1^d$ norm.

Similarly, $\phi$ is called an \emph{incompatibility witness} if, for all $g$-tuple of \emph{compatible} measurement operators $A$, $\langle \phi, A \rangle = \Tr[\phi A] \leq 1$. The set of incompatibility witnesses, denoted by $\mathcal I_d$, is the unit ball of the $\ell^g_1 \otimes_{\mathrm c*} S_1^d$ norm.
\end{prop}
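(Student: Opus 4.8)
The plan is to read off both statements from Theorem~\ref{thm:comp-norm} and Proposition~\ref{prop:min-and-max} together with the standard norm--duality between a reasonable crossnorm and its dual. Recall the defining identity $\|\phi\|_{\alpha^*} = \sup\{\langle \phi, A\rangle : \|A\|_\alpha \le 1\}$, valid since the relevant unit balls are symmetric; consequently $\|\phi\|_{\alpha^*}\le 1$ if and only if $\langle \phi, A\rangle \le 1$ for every $A$ in the unit ball of $\|\cdot\|_\alpha$. So the whole argument reduces to identifying the unit balls of $\|\cdot\|_{\ell^g_\infty\otimes_\epsilon S^d_\infty}$ and of $\|\cdot\|_{\mathrm c}$ with sets of measurement operators.

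For the effect--witness part I would first invoke Theorem~\ref{thm:comp-norm}(1): writing $A = \sum_{i=1}^g e_i\otimes(2E_i-I)$, the condition $\|A\|_{\ell^g_\infty\otimes_\epsilon S^d_\infty}\le 1$ is exactly the statement that $(E_i)_{i\in[g]}$ is a $g$-tuple of effects, i.e.\ a $g$-tuple of measurement operators. Every tensor of $\mathbb R^g\otimes\mathcal M(\mathbb C)^{\mathrm{sa}}_d$ arises this way (set $E_i=(A_i+I)/2$), and the set so obtained is symmetric because $I-E_i$ is an effect whenever $E_i$ is; hence the unit ball of $\ell^g_\infty\otimes_\epsilon S^d_\infty$ coincides with the set of measurement operators. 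Since $\ell^g_1\otimes_\pi S^d_1$ is the dual of $\ell^g_\infty\otimes_\epsilon S^d_\infty$, norm duality gives $\|\phi\|_\pi\le 1 \iff \langle\phi,A\rangle\le 1$ for all measurement operators $A$; the left-hand side is the defining condition of $\mathcal E_d$ and the right-hand side the defining condition of an effect witness, proving the first claim.

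For the incompatibility--witness part I would argue identically, replacing Theorem~\ref{thm:comp-norm}(1) by Theorem~\ref{thm:comp-norm}(2) and Proposition~\ref{prop:min-and-max}: the unit ball of $\|\cdot\|_{\mathrm c}$ is $\mathcal W^{\min}_d(\mathcal C)$ for $\mathcal C$ the $\ell^g_\infty$ unit ball, which by Theorem~\ref{thm:comp-norm}(2) is exactly the set of $A = \sum_i e_i\otimes(2E_i-I)$ with $(E_i)_{i\in[g]}$ compatible; this set is again symmetric, e.g.\ by relabelling the outcomes of a joint POVM. The only extra ingredient is that $\|\cdot\|_{\mathrm c*}$ is genuinely the dual of $\|\cdot\|_{\mathrm c}$, so that $\|A\|_{\mathrm c}=\sup\{\langle\phi,A\rangle:\|\phi\|_{\mathrm c*}\le 1\}$; this is precisely what was established above via Slater's condition from the strict feasibility of the SDPs in Eq.~\eqref{eq:c-norm-SDP} and Eq.~\eqref{eq:c-norm-dual-SDP}. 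Norm duality then yields $\|\phi\|_{\mathrm c*}\le 1 \iff \langle\phi,A\rangle\le 1$ for all compatible measurement operators $A$, i.e.\ $\mathcal I_d$ is the set of incompatibility witnesses and is the unit ball of $\ell^g_1\otimes_{\mathrm c*}S^d_1$.

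Since every step is a direct appeal to an already established result, there is no substantial obstacle; the only point requiring a moment's care is the passage from ``$|\langle\phi,A\rangle|\le 1$ on the unit ball'' to ``$\langle\phi,A\rangle\le 1$ for all (compatible) measurement operators'', which is handled by the symmetry of these sets together with the observation that $A$ is a (compatible) measurement operator exactly when it lies in the relevant unit ball.
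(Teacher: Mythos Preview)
Your proposal is correct and follows essentially the same route as the paper: the proposition in the paper has no separate proof but is explicitly introduced as a summary of the preceding discussion, which establishes exactly the ingredients you invoke (Theorem~\ref{thm:comp-norm}, Proposition~\ref{prop:min-and-max}, injective/projective duality for the effect part, and the SDP strong duality via Slater's condition for the $\|\cdot\|_{\mathrm c*}$ part). Your explicit remark on the symmetry of the unit balls, needed to pass from $|\langle\phi,A\rangle|\le 1$ to $\langle\phi,A\rangle\le 1$, is a point the paper leaves implicit.
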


\bigskip

Up to this point, we have considered the duality relation measurement - effect given by the standard scalar product $\langle \cdot , \cdot \rangle$ in $\mathbb R^g \otimes \mathcal M(\mathbb C)^{\mathrm{sa}}_d$. We shall now consider duality in the \emph{matrix convex set} setting \cite{davidson2016dilations} (also called polar duality), where the condition 
$$\langle \phi , A \rangle = \sum_{i=1}^g \Tr[\phi_i A_i] \leq 1$$
is replaced by
$$\sum_{i=1}^g \phi_i \otimes A_i \leq I.$$
Importantly, under this duality, the maximal (resp.~minimal) matrix convex set corresponding to the $\ell^g_\infty$ ball corresponds to the minimal (resp.~maximal) matrix convex set of the $\ell^g_1$ ball, see Lemma \ref{lem:mcs-dual}. We shall consider the duals (in the sense of matrix convex sets) of the sets of effects and compatible effects from Proposition \ref{prop:min-and-max}.
 
We start the maximal matrix convex set for the unit ball of $\ell_1^g$. It is defined as
\begin{equation*}
    \mathcal W^{\max}_d(B_{\ell_1^g}) := \left\{X \in (\mathcal M(\mathbb C)^{\mathrm{sa}}_d)^g: \sum_{i = 1}^g \epsilon_i X_i \leq I \quad \forall \epsilon \in \{\pm 1\}\right\}.
\end{equation*}
It is again possible to express $\mathcal W^{\max}_d(B_{\ell_1^g})$ as the unit ball of a norm:
\begin{prop}
The unit ball of $\|\cdot\|_{\ell_1^g \otimes_\epsilon S^d_\infty}$ is $\mathcal W^{\max}_d(B_{\ell_1^g})$.
\end{prop}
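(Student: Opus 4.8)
The plan is to use the formula for the injective tensor norm $\ell_1^g \otimes_\epsilon X$ recorded in Equation~\eqref{eq:injective-norm}, specialized to $X = S_\infty^d$. Writing an element as $A = \sum_{i=1}^g e_i \otimes A_i$, that formula gives
\[
\|A\|_{\ell_1^g \otimes_\epsilon S_\infty^d} = \sup_{\epsilon \in \{\pm 1\}^g} \Big\| \sum_{i=1}^g \epsilon_i A_i \Big\|_\infty.
\]
So the statement $\|A\|_{\ell_1^g \otimes_\epsilon S_\infty^d} \leq 1$ is equivalent to $\big\|\sum_{i=1}^g \epsilon_i A_i\big\|_\infty \leq 1$ for every sign vector $\epsilon$, which for a Hermitian matrix means $-I \leq \sum_{i=1}^g \epsilon_i A_i \leq I$. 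Since replacing $\epsilon$ by $-\epsilon$ swaps the two inequalities, this is the same as requiring $\sum_{i=1}^g \epsilon_i A_i \leq I$ for all $\epsilon \in \{\pm 1\}^g$, which is precisely the defining condition of $\mathcal W_d^{\max}(B_{\ell_1^g})$ as displayed just before the proposition.

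Concretely, the proof is: first invoke Equation~\eqref{eq:injective-norm} to get the $\sup$-over-signs expression for the norm; second, observe that for a single Hermitian matrix $M$, $\|M\|_\infty \leq 1 \iff -I \leq M \leq I$; third, note the symmetry $\epsilon \leftrightarrow -\epsilon$ collapses the two-sided bound to the one-sided family $\sum_i \epsilon_i A_i \leq I$; fourth, match this with the definition of $\mathcal W_d^{\max}(B_{\ell_1^g})$. One should also remark (or it is immediate) that the unit ball of a norm on a finite-dimensional space is a closed convex set, and that $\mathcal W_d^{\max}(B_{\ell_1^g})$ is exactly this ball — there is nothing further to check since the equivalence of membership conditions is an "iff" at every level. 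Alternatively, one could cite Lemma~\ref{lem:mcs-dual} together with the earlier Proposition identifying the unit ball of $\|\cdot\|_{\ell_\infty^g \otimes_\epsilon S_\infty^d}$ with $\mathcal W_d^{\max}(B_{\ell_\infty^g})$ and duality of $\epsilon$ and $\pi$ norms, but the direct computation is shorter.

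I do not anticipate a genuine obstacle here; this is essentially an unpacking of definitions. The only mild subtlety worth stating explicitly is the passage from the two-sided operator-norm bound $-I \leq \sum_i \epsilon_i A_i \leq I$ to the one-sided family indexed by all of $\{\pm 1\}^g$, which relies on the set $\{\pm 1\}^g$ being closed under negation — trivial, but it is the one place where one might otherwise worry about a factor-of-two or a missing inequality. Everything else is the standard identification of a Banach-space unit ball with a matrix convex set at level $d$.
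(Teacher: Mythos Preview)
Your proof is correct and follows exactly the paper's approach: the paper's own proof is the one-line remark ``This follows directly from Eq.~\eqref{eq:injective-norm} with $X = S_\infty^d$,'' and you have simply unpacked that reference in full detail, including the harmless $\epsilon \leftrightarrow -\epsilon$ symmetry observation.
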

\begin{proof}
This follows directly from Eq.~\eqref{eq:injective-norm} with $X = S_\infty^d$.
\end{proof}

To study the minimal matrix convex set associated to the unit ball of the $\ell_1^g$ norm, consider the following norm: 
    \begin{equation*}
        \|X\|_{\mathrm{wit}}:=\sup_{ \rho \in \mathcal S(\mathbb C^d)} \sum_{i = 1}^g \|\rho^{1/2} X_i \rho^{1/2}\|_1.
    \end{equation*}
Since the matrices $\rho$ above are quantum states, it is quite easy to see that the quantity above indeed defines a norm on $\mathbb R^g \otimes \mathcal M(\mathbb C)_d^{\mathrm{sa}}$.

\begin{prop}\label{prop:wit-is-min}
The unit ball of $\|\cdot\|_{\mathrm{wit}}$ is $\mathcal W^{\min}_d(B_{\ell_1^g})$.
\end{prop}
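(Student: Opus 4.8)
The plan is to deduce the statement from matrix convex duality, reducing it to a single spectral computation. Since $B_{\ell_1^g}$ is closed and contains $0$, with polar $(B_{\ell_1^g})^\circ = B_{\ell_\infty^g}$, Lemma \ref{lem:mcs-dual} gives $\mathcal{W}^{\min}(B_{\ell_1^g}) = \mathcal{W}^{\max}(B_{\ell_\infty^g})^\bullet$. Applying the characterization of the polar dual at a fixed level (the proposition in Section \ref{sec:prelminaries} stating that, when $0 \in \mathcal{F}_1$, one has $X \in \mathcal{F}^\bullet_d$ iff $\sum_i X_i \otimes F_i \leq I_{d^2}$ for all $F \in \mathcal{F}_d$) with $\mathcal{F} = \mathcal{W}^{\max}(B_{\ell_\infty^g})$, and recalling from Proposition \ref{prop:min-and-max} and its proof that $\mathcal{W}^{\max}_d(B_{\ell_\infty^g}) = \{A \in (\mathcal{M}(\mathbb{C})_d^{\mathrm{sa}})^g : -I \leq A_i \leq I\ \forall i\}$, we obtain
\[
\mathcal{W}^{\min}_d(B_{\ell_1^g}) = \Big\{X \in (\mathcal{M}(\mathbb{C})_d^{\mathrm{sa}})^g : \sum_{i=1}^g X_i \otimes A_i \leq I_{d^2}\ \text{ for all } -I \leq A_i \leq I\Big\}.
\]
Hence it suffices to prove the norm identity
\[
\sup\Big\{\lambda_{\max}\Big(\sum_{i=1}^g X_i \otimes A_i\Big) : -I \leq A_i \leq I\ \forall i\Big\} = \|X\|_{\mathrm{wit}},
\]
since then $X \in \mathcal{W}^{\min}_d(B_{\ell_1^g})$ iff the left-hand side is $\leq 1$ iff $\|X\|_{\mathrm{wit}} \leq 1$, and as $\|\cdot\|_{\mathrm{wit}}$ is a norm this exhibits $\mathcal{W}^{\min}_d(B_{\ell_1^g})$ as its unit ball.

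To compute the supremum, I would write $\lambda_{\max}(\sum_i X_i \otimes A_i) = \sup\{\bra{\psi} \sum_i X_i \otimes A_i \ket{\psi} : \ket{\psi} \in \mathbb{C}^d \otimes \mathbb{C}^d,\ \braket{\psi|\psi}=1\}$, exchange the two suprema, and note that for fixed $\ket{\psi}$ the inner problem decouples over $i$: setting $\tau_i := \mathrm{Tr}_1[(X_i \otimes I)\dyad{\psi}]$, which is Hermitian, we have $\bra{\psi} X_i \otimes A_i \ket{\psi} = \mathrm{Tr}[A_i \tau_i]$, and $\sup_{-I \leq A_i \leq I} \mathrm{Tr}[A_i \tau_i] = \|\tau_i\|_1$, attained at $A_i = \mathrm{sgn}(\tau_i)$. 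Thus the supremum equals $\sup_{\braket{\psi|\psi}=1} \sum_{i=1}^g \|\tau_i\|_1$.

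The step I expect to require the most care is identifying $\|\tau_i\|_1$ with $\|\rho^{1/2} X_i \rho^{1/2}\|_1$, where $\rho = \mathrm{Tr}_2 \dyad{\psi}$ is the reduced state of $\psi$ on the first factor. For this I would use the Schmidt decomposition exactly as in the proof of the polar-dual proposition in Section \ref{sec:prelminaries}: write $\ket{\psi} = (\rho^{1/2} \otimes V)\ket{\Omega}$ with $\ket{\Omega} = \sum_k \ket{g_k} \otimes \ket{g_k}$ the unnormalised maximally entangled vector in the Schmidt basis $\{g_k\}$ and $V$ a unitary, substitute into $\tau_i$, and use the standard partial-trace identity $\mathrm{Tr}_1[(M \otimes I)\dyad{\Omega}] = M^{T}$ (transpose in the $\{g_k\}$ basis) to get $\tau_i = V (\rho^{1/2} X_i \rho^{1/2})^{T} V^\ast$; since transposition and unitary conjugation both preserve the trace norm, $\|\tau_i\|_1 = \|\rho^{1/2} X_i \rho^{1/2}\|_1$ follows. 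Finally, as $\ket{\psi}$ ranges over the unit vectors of $\mathbb{C}^d \otimes \mathbb{C}^d$ its reduced state $\rho$ ranges over all of $\mathcal{S}(\mathbb{C}^d)$ (every state on $\mathbb{C}^d$ has a purification there, e.g.\ $\ket{\psi} = (\rho^{1/2} \otimes I)\ket{\Omega}$), so $\sup_{\braket{\psi|\psi}=1} \sum_i \|\tau_i\|_1 = \sup_{\rho \in \mathcal{S}(\mathbb{C}^d)} \sum_i \|\rho^{1/2} X_i \rho^{1/2}\|_1 = \|X\|_{\mathrm{wit}}$, which closes the argument.

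As a consistency check on the ``easy'' inclusion $\mathcal{W}^{\min}_d(B_{\ell_1^g}) \subseteq \{\|\cdot\|_{\mathrm{wit}} \leq 1\}$, one may instead argue directly: for $X = \sum_j z_j \otimes Y_j$ with $z_j \in B_{\ell_1^g}$, $Y_j \geq 0$, $\sum_j Y_j = I$, and any state $\rho$, the triangle inequality together with $\|\rho^{1/2} Y_j \rho^{1/2}\|_1 = \mathrm{Tr}[\rho Y_j]$ gives $\sum_{i=1}^g \|\rho^{1/2} X_i \rho^{1/2}\|_1 \leq \sum_j \|z_j\|_1\, \mathrm{Tr}[\rho Y_j] \leq \sum_j \mathrm{Tr}[\rho Y_j] = 1$. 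The substance of the proposition is therefore the reverse inclusion, which is precisely what the duality computation above supplies.
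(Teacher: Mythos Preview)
Your proof is correct and follows essentially the same route as the paper: both use the duality $\mathcal W^{\min}(B_{\ell_1^g}) = \mathcal W^{\max}(B_{\ell_\infty^g})^\bullet$ and the Schmidt/maximally-entangled-state trick to relate $\bra{\psi} X_i \otimes A_i \ket{\psi}$ to $\rho^{1/2} X_i \rho^{1/2}$. The only cosmetic difference is that you invoke the level-$d$ restriction proposition up front and package the computation as an exact norm identity (handling both inclusions at once), whereas the paper treats the easy inclusion directly from the extreme-point decomposition and the hard one as an inequality over all $n$; the substance is the same.
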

\begin{proof}
We start by showing that $\mathcal W_d^{\min}(B_{\ell_1^g})$ is contained in the unit ball of $\|\cdot\|_{\mathrm{wit}}$. Thus, for $A \in \mathcal W_d^{\min}(B_{\ell_1^g})$, we have that
\begin{equation*}
    A = \sum_{i=1}^g e_i \otimes (P_i - N_i),
\end{equation*}
where $P_i$, $N_i \geq 0$ for all $i \in [g]$, $\sum_{i=1}^g (P_i + N_i) = I$, and we have used the fact that the extreme points of $B_{\ell_1^g}$ are $\pm e_i$. Hence, 
    \begin{align*}
        \|A\|_{\mathrm{wit}}&=\sup_{ \rho \in \mathcal S(\mathbb C^d)} \sum_{i = 1}^g \|\rho^{1/2} (P_i - N_i) \rho^{1/2}\|_1 \\
        &\leq\sup_{ \rho \in \mathcal S(\mathbb C^d)} \sum_{i = 1}^g (\mathrm{Tr}[\rho P_i] + \mathrm{Tr}[\rho N_i])\\
        &\leq 1.
    \end{align*}
Let $\mathcal C$ be the unit ball of $\|\cdot\|_{\mathrm{wit}}$. For the reverse inclusion, we need to show that
\begin{equation*}
    \mathcal C \subseteq \mathcal W^{\min}_d(B_{\ell_1^g}) = \mathcal W^{\max}_d(B_{\ell_\infty^g})^\bullet,
\end{equation*}
where we have used the duality of matrix convex sets. Thus, we need to show that for $(X_1, \ldots, X_g) \in \mathcal C$,
\begin{equation*}
    \sum_{i = 1}^g X_i \otimes A_i \leq I
\end{equation*}
for all $A_i \in \mathcal M(\mathbb C)_n^{\mathrm{sa}}$ such that $\|A_i\|_\infty \leq 1$ for all $i \in [g]$, $n \in \mathbb N$. Equivalently, we can show that for all such $A_i$,
\begin{equation} \label{eq:sandwich}
    \sum_{i = 1}^g \bra{\psi}X_i \otimes A_i\ket{\psi} \leq 1 \qquad \forall \ket{\psi} \in \mathbb C^d \otimes \mathbb C^n, \|\psi\|_2 = 1.
\end{equation}
Let $\ket{\Omega} = \frac{1}{\sqrt{d}}\sum_{i = 1}^d \ket{e_i}\otimes\ket{e_i}$. For $n \geq d$, we can write
\begin{equation*}
    \ket{\psi} = \sqrt{d \rho_1} \otimes V \ket{\Omega},
\end{equation*}
where $V: \mathbb C^d \hookrightarrow \mathbb C^n$ is an isometry and $\rho_1$ is the reduced density matrix of $\ket{\psi}$ on $\mathbb C^d$. Thus, we can write Eq.~\eqref{eq:sandwich} as 
\begin{equation} \label{eq:sandwich2}
        d\sum_{i = 1}^g \bra{\Omega} \rho^{1/2}X_i \rho^{1/2} \otimes V^\ast A_i V \ket{\Omega}\leq 1 \qquad \forall \rho \in \mathcal S(\mathbb C^d).
\end{equation}
Using that $\bra{\Omega} A \otimes B\ket{\Omega} = 1/d \mathrm{Tr}[A^T B]$, we rewrite the left hand side of Eq.~\eqref{eq:sandwich2} as 
\begin{equation} \label{eq:sandwich3}
     \sum_{i = 1}^g \mathrm{Tr}[ (V^\ast A_i V)^T \rho^{1/2}X_i \rho^{1/2}] \leq \sum_{i = 1}^g \| \rho^{1/2}X_i \rho^{1/2}\|_1,
\end{equation}
where we have used the fact that $\|V^\ast A_i V\|_\infty \leq \|A_i\|_\infty$ and the duality of Schatten $p$-norms. Thus, the right hand side of Eq.~\eqref{eq:sandwich3} is indeed upper bounded by $1$ for all $\rho \in \mathcal S(\mathbb C^d)$ for $(X_1, \ldots, X_g) \in \mathcal C$. Since for $n \leq d$, we can always embed the $A_i$ in dimension $d$ by adding zeroes, the assertion follows.
\end{proof}

\begin{remark}\label{rem:wit}
There is a close relation between the set $X \in \mathcal W^{\max}_d(B_{\ell_1^g})$ and the set $\mathcal I_d$ of incompatibility witnesses from Eq.~\eqref{eq:def-Id}: every $X \in \mathcal W^{\max}_d(B_{\ell_1^g})$ gives rise to an incompatibility witnesses:
\begin{equation} \label{eq:witness-family}
  X \in \mathcal W^{\max}_d(B_{\ell_1^g}) \implies  (\rho^{1/2} X_1 \rho^{1/2}, \ldots, \rho^{1/2} X_g \rho^{1/2}) \in \mathcal I_d. 
\end{equation}
Moreover, it is easy to see that every incompatibility witness arises in this way. This establishes a relation between the two types of duality considered in this section (see also Table \ref{tbl:inclusions}). Note that $X$ gives rise to elements in $\mathcal{SI}_d$ via Eq.~\eqref{eq:witness-family} for some $\rho \in \mathcal S(\mathbb C^d)$ if $\|X\|_{\mathrm{wit}}> 1$.
\end{remark}

\bigskip

From Proposition \ref{prop:wit-is-min}, we can also obtain a second characterization of the compatibility region in Definition \ref{def:Gamma}, this time as the inclusion set of $\ell_1^g$. 
\begin{thm} \label{thm:gamma-is-diamond}
Let $g$, $d \in \mathbb N$. Let $s \in [0,1]^g$. Then, $\{s_i E_i + (1-s_i)I/2\}_{i \in [g]}$ is a collection of compatible effects for all $g$-tuples $(E_i)_{i \in [g]} \in \mathrm{Eff}_d^g$, if and only if $s \in \Delta_{\diamond}(g,d)$. An equivalent way to phrase this is
\begin{equation*}
    \Gamma(g,d) = \Delta_{\diamond}(g,d).
\end{equation*}
\end{thm}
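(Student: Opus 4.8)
The plan is to mirror the strategy used for Theorem~\ref{thm:gamma-is-cube}, but now exploiting matrix convex set duality rather than a direct argument with the compatibility norm. The key observation is that $\Gamma(g,d)$ was already identified with $\Delta_{\square}(g,d)$ in Theorem~\ref{thm:gamma-is-cube}, so it suffices to prove that $\Delta_{\square}(g,d) = \Delta_{\diamond}(g,d)$; that is, $s \cdot \mathcal W^{\max}_d(B_{\ell_\infty^g}) \subseteq \mathcal W^{\min}_d(B_{\ell_\infty^g})$ if and only if $s \cdot \mathcal W^{\max}_d(B_{\ell_1^g}) \subseteq \mathcal W^{\min}_d(B_{\ell_1^g})$. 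The mechanism that makes this work is Lemma~\ref{lem:mcs-dual}: since $B_{\ell_\infty^g}$ and $B_{\ell_1^g}$ are polar duals of one another (and both contain $0$), we have $\mathcal W^{\max}_d(B_{\ell_\infty^g})^\bullet = \mathcal W^{\min}_d(B_{\ell_1^g})$ and $\mathcal W^{\min}_d(B_{\ell_\infty^g})^\bullet = \mathcal W^{\max}_d(B_{\ell_1^g})$.

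First I would record the behaviour of polar duality under the asymmetric scaling $s\cdot$. For $s \in (0,1]^g$ with all coordinates strictly positive, a direct check from the definition of $\mathcal F^\bullet$ shows that $(s \cdot \mathcal F)^\bullet = s^{-1} \cdot (\mathcal F^\bullet)$, where $s^{-1} = (s_1^{-1}, \ldots, s_g^{-1})$: indeed $\sum_i (s_i X_i)\otimes F_i \leq I$ for all $F \in \mathcal F$ is the same as $\sum_i X_i \otimes (s_i F_i) \leq I$, i.e.\ $X \in (s\cdot\mathcal F)^\bullet \iff sX \in \mathcal F^\bullet \iff X \in s^{-1}\cdot\mathcal F^\bullet$. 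Then, taking polars reverses inclusions, so
\begin{equation*}
 s \cdot \mathcal W^{\max}_d(B_{\ell_\infty^g}) \subseteq \mathcal W^{\min}_d(B_{\ell_\infty^g})
 \iff
 \mathcal W^{\min}_d(B_{\ell_\infty^g})^\bullet \subseteq \big(s \cdot \mathcal W^{\max}_d(B_{\ell_\infty^g})\big)^\bullet,
\end{equation*}
and by the scaling identity together with Lemma~\ref{lem:mcs-dual} the right-hand side becomes $\mathcal W^{\max}_d(B_{\ell_1^g}) \subseteq s^{-1}\cdot \mathcal W^{\min}_d(B_{\ell_1^g})$, which is equivalent to $s \cdot \mathcal W^{\max}_d(B_{\ell_1^g}) \subseteq \mathcal W^{\min}_d(B_{\ell_1^g})$. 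This gives $\Delta_{\square}(g,d) = \Delta_{\diamond}(g,d)$ for strictly positive $s$, and the boundary case $s_i = 0$ for some $i$ follows by a continuity/closedness argument, since both inclusion sets are closed (as noted after Definition~\ref{def:Delta}) and contain a neighbourhood-like structure near the coordinate hyperplanes via convexity and the containment $e_i \in \Gamma(g,d,\mathbf{k})$ — more cleanly, one handles $s_i = 0$ by projecting onto the non-zero coordinates and reducing $g$.

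Alternatively — and this is the route I would actually present, as it keeps everything on the level of norms — I would use Proposition~\ref{prop:wit-is-min}, which identifies $\mathcal W^{\min}_d(B_{\ell_1^g})$ as the unit ball of $\|\cdot\|_{\mathrm{wit}}$, together with the fact that $\mathcal W^{\max}_d(B_{\ell_1^g})$ is the unit ball of $\|\cdot\|_{\ell_1^g \otimes_\epsilon S_\infty^d}$. One then unwinds the statement $s \in \Delta_{\diamond}(g,d)$ directly: it says exactly that for every $A = \sum_i e_i \otimes A_i$ with $\|A\|_{\ell_1^g\otimes_\epsilon S_\infty^d} \leq 1$ — equivalently, by Eq.~\eqref{eq:injective-norm}, every $A$ with $\|\sum_i \epsilon_i A_i\|_\infty \leq 1$ for all sign vectors $\epsilon$ — one has $\|s\cdot A\|_{\mathrm{wit}} \leq 1$. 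Parametrizing $A_i = 2E_i - I$ and using the structure of $\|\cdot\|_{\mathrm{wit}}$ as a supremum over states, this reproduces precisely the compatibility statement of Theorem~\ref{thm:comp-norm} after the noise substitution $E_i \mapsto s_i E_i + (1-s_i)I/2$, which by Theorem~\ref{thm:gamma-is-cube} is the definition of $s \in \Gamma(g,d)$.

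The main obstacle I anticipate is the bookkeeping around \emph{which} space's injective norm controls membership in $\mathcal W^{\max}$: one must be careful that $\|\cdot\|_{\ell_1^g\otimes_\epsilon S_\infty^d}$ (not $\ell_\infty^g\otimes_\epsilon S_\infty^d$) is the relevant norm for $B_{\ell_1^g}$, and that the scaling $s\cdot$ interacts correctly with the polar. The cleanest safeguard against sign errors is to first establish the abstract scaling lemma $(s\cdot\mathcal F)^\bullet = s^{-1}\cdot\mathcal F^\bullet$ for $s$ with strictly positive entries, then combine it mechanically with Lemma~\ref{lem:mcs-dual} and Theorem~\ref{thm:gamma-is-cube}, deferring the degenerate coordinate case to a short reduction in $g$.
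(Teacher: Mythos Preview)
Your first route --- reducing to Theorem~\ref{thm:gamma-is-cube} and then proving $\Delta_{\square}(g,d)=\Delta_{\diamond}(g,d)$ via the polar-duality identity $(s\cdot\mathcal F)^\bullet = s^{-1}\cdot\mathcal F^\bullet$ together with Lemma~\ref{lem:mcs-dual} --- is correct, and is exactly the alternative the paper itself points to in the remark right after the theorem. It differs from the paper's own proof, which does \emph{not} pass through Theorem~\ref{thm:gamma-is-cube}: the paper argues directly, pairing a measurement tuple $A=\sum_i e_i\otimes(2E_i-I)$ with a witness $\varphi=\sum_i e_i\otimes\rho^{1/2}X_i\rho^{1/2}$ (Remark~\ref{rem:wit}), using $\langle s.\varphi,A\rangle=\langle\varphi,A'\rangle$ with $A'$ the noisy tuple, and then invoking Proposition~\ref{prop:wit-is-min} to translate $\|s.X\|_{\mathrm{wit}}\le 1$ into $s.X\in\mathcal W^{\min}_d(B_{\ell_1^g})$. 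Your polar-duality argument is cleaner and more structural; the paper's argument is more hands-on and makes the witness interpretation explicit.

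Your second route --- the one you say you would actually present --- has a genuine role confusion. The tensor $A\in\mathcal W^{\max}_d(B_{\ell_1^g})$ that you introduce is the \emph{witness}-side object (what the paper calls $X$), not the measurement tuple. Writing ``parametrizing $A_i=2E_i-I$'' puts the effects on the wrong side of the pairing: the condition $\|\sum_i\epsilon_iA_i\|_\infty\le 1$ for all $\epsilon$ is \emph{not} the condition that each $A_i=2E_i-I$ with $E_i$ an effect (that is the $\ell_\infty^g$-cube constraint, i.e.\ $\mathcal W^{\max}_d(B_{\ell_\infty^g})$). Hence ``this reproduces precisely the compatibility statement of Theorem~\ref{thm:comp-norm}'' does not follow as written. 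The correct unwinding is: expand $\|s\cdot X\|_{\mathrm{wit}}=\sup_\rho\sum_i\|\rho^{1/2}s_iX_i\rho^{1/2}\|_1$, introduce the effects as \emph{dual variables} via $\|Y\|_1=\sup_{\|B\|_\infty\le 1}\Tr[BY]$ with $B=2E-I$, and recognise the resulting expression as $\langle\varphi,A'\rangle$ with $\varphi_i=\rho^{1/2}X_i\rho^{1/2}\in\mathcal I_d$ (Remark~\ref{rem:wit}) and $A'$ the noisy measurement tuple; only then does the $\|\cdot\|_{\mathrm c}$--$\|\cdot\|_{\mathrm c*}$ duality give compatibility of $A'$. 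If you want to present the norm-based route, you need this bridge; otherwise, stick with your first, polar-duality argument.
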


\begin{proof}
Let $A = \sum_{i = 1}^g e_i \otimes (2E_i -I)$ with $(E_i)_{i \in [g]} \in \mathrm{Eff}_d$ and $\varphi = \sum_{i=1}^g e_i \otimes (\rho^{1/2}X_i\rho^{1/2})$ for $X \in  \mathcal W^{\max}_d(B_{\ell_1^g})$ and $\rho \in \mathcal S(\mathbb C^d)$. We note that for $s \in [0,1]$, 
\begin{equation*}
    \mathrm{Tr}[s.\varphi A] =   \mathrm{Tr}[\varphi A^\prime],
\end{equation*}
where $A^\prime = \sum_{i = 1}^g e_i \otimes (2 (s_iE_i+(1-s_i)I/2 -I)$. Thus, if $s \in \Gamma(g,d)$, then $\mathrm{Tr}[s.\varphi A] \leq 1$ using Remark \ref{rem:wit}, which implies $\|s.X\|_{\mathrm{wit}} \leq 1$ since  $(E_i)_{i \in [g]} \in \mathrm{Eff}_d$ and $\rho \in \mathcal S(\mathbb C^d)$ were arbitrary. Thus, by Proposition \ref{prop:wit-is-min}, we infer that $s.X \in \mathcal W^{\min}_d(B_{\ell_1^g})$ and hence $s \in \Delta_{\diamond}(g,d)$. Conversely, let $s \in \Delta_{\diamond}(g,d)$. Then, $\mathrm{Tr}[\varphi A^\prime] \leq 1$ for all $\varphi \in {\mathcal I_d}$, since they can all be written as in Eq.~\eqref{eq:witness-family} and they only contain strict incompatibility witnesses if $X \not \in \mathcal W^{\min}_d(B_{\ell_1^g})$ by Proposition \ref{prop:wit-is-min}. Therefore, $A^\prime$ corresponds to compatible effects and $s \in \Gamma(g,d)$.
\end{proof}

\begin{remark}
From Theorems \ref{thm:gamma-is-cube} and \ref{thm:gamma-is-diamond}, it follows in particular that 
\begin{equation*}
    \Delta_{\diamond}(g,d) = \Delta_{\square}(g,d).
\end{equation*}
We could have proven this statement directly using duality of matrix convex sets as in \cite[Proposition 5.2]{bluhm2022steering}.
\end{remark}
 
\section{Discussion} \label{sec:final-section} 
In this paper, we have characterized quantum effects and compatible quantum effects with the help of tensor norms. We have done the same thing for the dual notions of effect and incompatibility witnesses. The notion of witness is defined via duality, and we have considered here duality in the usual sense of convex geometry, and in the matrix convexity sense. We summarize our findings in Table \ref{tbl:inclusions}. Note that the norms appearing are very natural, involving for the most part the usual $\ell_p$ norms on $\mathbb R^g$ and the Schatten classes on the space of Hermitian matrices.

\begin{table}[ht]
\bgroup
\def\arraystretch{2}
\begin{tabular}{ccc}
$\operatorname{Ball}(\ell_1^g \otimes_\pi S_1^d)=\mathcal E_d$ & $\subseteq$ & $\operatorname{Ball}(\ell_1^g \otimes_{\mathrm c*} S_1^d)=\mathcal I_d$ \\
$\Bigg\updownarrow$ & $\circ$ $-$ duality & $\Bigg\updownarrow$ \\
$\operatorname{Ball}(\ell_\infty^g \otimes_\epsilon S_\infty^d) = \mathcal W^{\max}_d(\operatorname{Ball}(\ell_\infty^g))$ & $\supseteq$ & $\operatorname{Ball}(\ell_\infty^g \otimes_{\mathrm c} S_\infty^d) = \mathcal W^{\min}_d(\operatorname{Ball}(\ell_\infty^g))$ \\
$\Bigg\updownarrow$ & $\bullet$ $-$ duality & $\Bigg\updownarrow$ \\
$\operatorname{Ball}(\|\cdot\|_{\mathrm{wit}})= \mathcal W^{\min}_d(\operatorname{Ball}(\ell_1^g))$ & $\subseteq$ & $\operatorname{Ball}(\ell_1^g \otimes_\epsilon S_\infty^d) = \mathcal W^{\max}_d(\operatorname{Ball}(\ell_1^g))$
\end{tabular}
\egroup
\medskip
\caption{Different perspectives on dichotomic quantum measurements and their witnesses. The middle row of the diagram describes the set of $g$-tuples of dichotomic measurements (left) and $g$-tuples of \emph{compatible} measurements (right). The top row describes the duals of these sets (in the usual, scalar product sense), while the bottom row describes their matrix convex duals. The top-right and bottom-left cells are related, see Remark \ref{rem:wit}.}
\label{tbl:inclusions}
\end{table}

\bigskip

Another focal point of our paper is the computation of the incompatibility region $\Gamma(g,d)$, for a given number $g$ of dichotomic measurements, and a given Hilbert space dimension $d$. We have established in this work that 
\begin{equation*}
    \Delta_{\diamond}(g,d) = \Gamma(g,d) = \Delta_{\square}(g,d)
\end{equation*}
(see Theorems \ref{thm:gamma-is-cube} and \ref{thm:gamma-is-diamond}). As discussed in \cite{bluhm2018joint} for the matrix diamond and in \cite{bluhm2022steering} for the matrix cube, these findings enable us to give concrete bounds on $\Gamma(g,d)$. In order to keep the present article self-contained, we will give in the following a concise discussion of these bounds. For the inclusion set of the matrix cube, the following proposition summarizes the findings in \cite{passer2018minimal, bluhm2018joint}:
\begin{prop} \label{prop:quarter-circle}
Let $g$, $d \in \mathbb N$ and let
\begin{equation*}
    \mathrm{QC}_g := \left\{s \in [0,1]^g: \sum_{i=1}^g s_i^2 \leq 1\right\}.
\end{equation*}
Then, $\mathrm{QC}_g \subseteq \Delta_\diamond(g,d)$. For $d \geq d^{\left\lceil \frac{g-1}{2} \right\rceil}$, it holds moreover that $\mathrm{QC}_g = \Delta_\diamond(g,d)$.
\end{prop}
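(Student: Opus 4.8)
The plan is to prove the two statements separately: first the inclusion $\mathrm{QC}_g \subseteq \Delta_\diamond(g,d)$, which holds for all $d$, and then the reverse inclusion under the dimension hypothesis. For the inclusion $\mathrm{QC}_g \subseteq \Delta_\diamond(g,d)$, by Theorem \ref{thm:gamma-is-diamond} it suffices to show that any $s$ with $\sum_i s_i^2 \leq 1$ lies in $\Gamma(g,d)$, i.e.\ that for arbitrary effects $E_1, \dots, E_g \in \mathrm{Eff}_d$ the noisy effects $s_i E_i + (1-s_i) I/2$ are compatible. I would build an explicit joint POVM. Writing $F_i = 2E_i - I \in [-I, I]$, one seeks a joint measurement; a standard construction (as in \cite{busch2013comparing}) uses effects of the form $R_{\boldsymbol\epsilon} = 2^{-g}\big(I + \sum_i \epsilon_i s_i F_i\big)$ for $\boldsymbol\epsilon \in \{\pm1\}^g$, whose positivity is guaranteed precisely when $\|\sum_i \epsilon_i s_i F_i\|_\infty \leq 1$; since $\|F_i\|_\infty \leq 1$, bounding $\|\sum_i \epsilon_i s_i F_i\|_\infty$ by $\sum_i s_i$ only gives the simplex bound, so instead I would invoke the operator-valued refinement that yields the $\ell_2$ condition $\sum_i s_i^2 \leq 1$ — equivalently, phrase it via $\|\cdot\|_{\mathrm{wit}}$ and Proposition \ref{prop:wit-is-min}, showing $\|s \cdot X\|_{\mathrm{wit}} \leq 1$ for all $X \in \mathcal W^{\max}_d(B_{\ell_1^g})$ using a Cauchy--Schwarz estimate on $\sum_i \|\rho^{1/2} X_i \rho^{1/2}\|_1$ together with the constraint $\sum_i \epsilon_i X_i \leq I$.

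For the reverse inclusion $\Delta_\diamond(g,d) \subseteq \mathrm{QC}_g$ when $d \geq d^{\lceil (g-1)/2\rceil}$ (which I read as: the Hilbert space dimension is large enough, namely at least $2^{\lceil (g-1)/2\rceil}$ — I would double-check the intended exponent against \cite{passer2018minimal, bluhm2018joint}), one must exhibit, for each $s \notin \mathrm{QC}_g$, a tuple of effects $E_i$ whose noisy versions are incompatible. The natural choice is to take $E_i = (I + U_i)/2$ where $U_1, \dots, U_g$ are anticommuting self-adjoint unitaries (a representation of the Clifford algebra), which requires dimension $2^{\lceil g/2 \rceil}$ — or $2^{\lceil (g-1)/2\rceil}$ if one uses the even part. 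Then $s_i F_i = s_i U_i$ and one computes $\big(\sum_i \epsilon_i s_i U_i\big)^2 = \big(\sum_i s_i^2\big) I$ by anticommutativity, so $\|\sum_i \epsilon_i s_i U_i\|_\infty = \|s\|_2$. The key point is then a \emph{converse}: if $\|s\|_2 > 1$, these noisy measurements are genuinely incompatible. I would establish this by producing an incompatibility witness, e.g.\ via Remark \ref{rem:wit} with $X_i = U_i / \|s\|_2$ (rescaled to lie in $\mathcal W^{\max}_d(B_{\ell_1^g})$, using again that $(\sum \epsilon_i U_i)^2 = g I$ so $X_i = U_i/\sqrt{g} \in \mathcal W^{\max}_d(B_{\ell_1^g})$) and a suitable state $\rho$, checking $\|s \cdot X\|_{\mathrm{wit}} > 1$.

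The main obstacle I anticipate is the reverse inclusion, specifically two points: (i) verifying that the Clifford representation indeed fits in the allotted dimension $2^{\lceil (g-1)/2 \rceil}$ — this hinges on a parity trick whereby $g$ anticommuting unitaries can be realized using only $g-1$ "independent" generators because the product of all generators is (up to scalar) central, reducing the needed dimension by a factor of $2$ — and (ii) showing that $\|s\|_2 > 1$ forces incompatibility, not merely that the obvious joint-POVM ansatz fails. For (ii) the cleanest route is the witness computation above: one needs the operator inequality $\sum_i (s_i U_i) \otimes (s_i U_i) = \sum_i s_i^2 \, U_i \otimes U_i \not\leq I$ when $\|s\|_2 > 1$, which follows since $\big(\sum_i s_i^2 U_i \otimes U_i\big)$ has an eigenvalue $\sum_i s_i^2 > 1$ on a suitable joint eigenvector of the commuting operators $U_i \otimes U_i$. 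I would close the argument by combining this with $X \in \mathcal W^{\max}_d(B_{\ell_1^g})$ and Proposition \ref{prop:wit-is-min} to conclude $s \cdot X \notin \mathcal W^{\min}_d(B_{\ell_1^g})$, hence $s \notin \Delta_\diamond(g,d)$.
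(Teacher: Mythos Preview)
The paper does not prove this proposition; it is stated without proof as a summary of results from \cite{passer2018minimal, bluhm2018joint}. So there is no in-paper argument to compare against, and what you outline is in fact the route taken in those references. You are also right to flag the exponent: the printed condition $d \geq d^{\lceil (g-1)/2\rceil}$ is a typo for $d \geq 2^{\lceil (g-1)/2\rceil}$.

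For the reverse inclusion your plan is correct and standard: take anticommuting self-adjoint unitaries $U_1,\dots,U_g$ realised in dimension $2^{\lceil (g-1)/2\rceil}$ and use them both as the effects $A_i=U_i\in\mathcal W^{\max}_d(B_{\ell_\infty^g})$ and, on the witness side, as $Y_i = s_i U_i/\|s\|_2$. Note the scaling: your parenthetical $X_i=U_i/\sqrt g$ is what puts $X$ in $\mathcal W^{\max}_d(B_{\ell_1^g})$ uniformly, but for the \emph{asymmetric} point $s$ you want $Y_i=s_iU_i/\|s\|_2$, since $\big(\sum_i\epsilon_i s_i U_i\big)^2=\|s\|_2^2\,I$ gives $\|\sum_i\epsilon_iY_i\|_\infty=1$ exactly. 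The commuting involutions $U_i\otimes U_i$ share a $+1$-eigenvector (e.g.\ the maximally entangled vector when the $U_i$ are chosen real), so $\lambda_{\max}\big(\sum_i s_iU_i\otimes Y_i\big)=\|s\|_2>1$, whence $s\cdot A\notin\mathcal W^{\min}_d(B_{\ell_\infty^g})$ and $s\notin\Delta_\square(g,d)=\Delta_\diamond(g,d)$.

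For the forward inclusion your wit-norm/Cauchy--Schwarz route works, but the decisive step is left implicit and should be named. With $a_i:=\|\rho^{1/2}X_i\rho^{1/2}\|_1$, H\"older for Schatten norms gives $a_i\leq\|\rho^{1/2}X_i\|_2\,\|\rho^{1/2}\|_2=\big(\Tr[\rho X_i^2]\big)^{1/2}$, so $\sum_i a_i^2\leq\Tr\big[\rho\sum_i X_i^2\big]$. The constraint $-I\leq\sum_i\epsilon_iX_i\leq I$ yields $(\sum_i\epsilon_iX_i)^2\leq I$; \emph{averaging this over all $\epsilon\in\{\pm1\}^g$} kills the cross terms and gives $\sum_iX_i^2\leq I$, hence $\|a\|_2\leq 1$ and $\sum_i s_ia_i\leq\|s\|_2\leq 1$ by Cauchy--Schwarz. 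This sign-averaging identity is precisely the ``operator-valued refinement'' you allude to; without it the argument is incomplete, since neither the naive joint POVM $R_{\boldsymbol\epsilon}=2^{-g}(I+\sum_i\epsilon_is_iF_i)$ nor a bare triangle inequality reaches the $\ell_2$ threshold.
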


For the matrix cube, we can summarize the results of \cite{bluhm2022steering}, which build on \cite{ben-tal2002tractable, helton2019dilations}, as follows:
\begin{prop}
Let $d \in \mathbb N$. The largest $\tau(d)$ such that $\tau(d)(1, \ldots, 1) \in \Delta_\square(g,d)$ for all $g \in \mathbb N$ is 
\begin{equation*}
    \tau_\ast(d) = 4^{-n} \binom{2n}{n}, \qquad n = \left\lfloor\frac{d}{2}\right\rfloor.
\end{equation*}
Asymptotically, $\tau_\ast(d)$ behaves as $\sqrt{2/(\pi d)}$.
\end{prop}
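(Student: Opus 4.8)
The plan is to translate the statement about $\tau_\ast(d)(1,\ldots,1) \in \Delta_\square(g,d)$ into a statement about simultaneous dilation of a $g$-tuple of contractions, using the identification $\Delta_\square(g,d) = \Gamma(g,d)$ from Theorem~\ref{thm:gamma-is-cube} together with Proposition~\ref{prop:min-and-max}. Concretely, $s(1,\ldots,1) \in \Delta_\square(g,d)$ means exactly that $s \cdot \mathcal W^{\max}_d(B_{\ell_\infty^g}) \subseteq \mathcal W^{\min}_d(B_{\ell_\infty^g})$, i.e.\ that for every $g$-tuple $(A_1,\ldots,A_g)$ of self-adjoint contractions on $\mathbb C^d$, the scaled tuple $(sA_1,\ldots,sA_g)$ lies in the minimal matrix convex set over the cube. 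By the dilation-theoretic description of $\mathcal W^{\min}$ (each element is a UCP image of a tuple of commuting contractions, equivalently admits a commuting self-adjoint dilation with joint spectrum in $[-1,1]^g$), this is the classical ``commuting dilation constant'' / matrix-cube problem. The key point, established in \cite{helton2019dilations} (see also \cite{passer2018minimal}), is that the optimal such $s$, uniformly over $g$ and over the choice of the $A_i$, is governed by the worst case where the $A_i$ are anticommuting, and equals the average $\mathbb E|\langle \epsilon, u\rangle|$-type quantity that evaluates to $4^{-n}\binom{2n}{n}$ with $n = \lfloor d/2 \rfloor$.

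The steps I would carry out, in order: First, record that by Theorem~\ref{thm:gamma-is-cube} and the definition of $\Delta_\square$, the quantity $\tau(d)$ in the statement is the largest $t$ such that $t \cdot X \in \mathcal W^{\min}_d(B_{\ell_\infty^g})$ for every $X \in \mathcal W^{\max}_d(B_{\ell_\infty^g})$ and every $g$; so we must bound from both sides the best scaling in the matrix-cube inclusion. Second, for the \emph{lower bound} $\tau_\ast(d)(1,\ldots,1) \in \Delta_\square(g,d)$, invoke the dilation result of \cite{helton2019dilations}: any tuple of self-adjoint contractions $(A_1,\ldots,A_g)$ on $\mathbb C^d$ admits, after scaling by $4^{-n}\binom{2n}{n}$, a commuting self-adjoint dilation with joint operator cube spectrum, which is precisely membership in $\mathcal W^{\min}$; this is where the binomial constant enters, via the spherical-integral / Gaussian computation in that reference (equivalently, it is $2^{1-d}\binom{d-1}{\lfloor (d-1)/2\rfloor}$ rewritten, matching $n = \lfloor d/2 \rfloor$). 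Third, for the \emph{matching upper bound}, exhibit an explicit tuple achieving equality: take $g$ large and let $A_1,\ldots,A_g$ be (a compression to $\mathbb C^d$ of) anticommuting self-adjoint unitaries, i.e.\ Clifford generators; compute that the largest $t$ with $(tA_1,\ldots,tA_g) \in \mathcal W^{\min}_d(B_{\ell_\infty^g})$ is exactly $4^{-n}\binom{2n}{n}$, using the known value of $\| \sum_i \epsilon_i A_i \|$ over sign patterns and the polar-dual characterization $\mathcal W^{\min}(B_{\ell_\infty^g}) = \mathcal W^{\max}(B_{\ell_1^g})^\bullet$ from Lemma~\ref{lem:mcs-dual}. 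Fourth, deduce the asymptotics $4^{-n}\binom{2n}{n} \sim 1/\sqrt{\pi n} \sim \sqrt{2/(\pi d)}$ from Stirling's formula, which is routine.

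The main obstacle is the sharpness of the constant: the lower bound (that the scaled tuple dilates) and the upper bound (that Clifford anticommuting tuples are the extremal obstruction and pin the constant exactly at $4^{-n}\binom{2n}{n}$ rather than something slightly larger) are both nontrivial and rely essentially on the analysis in \cite{helton2019dilations, bluhm2022steering}. In a self-contained write-up one could instead just cite those works for the value of $\tau_\ast(d)$ and restrict the novel content to the dictionary step (Theorems~\ref{thm:gamma-is-cube}, \ref{thm:gamma-is-diamond} give $\Delta_\square(g,d) = \Gamma(g,d)$), so that the proposition becomes a direct corollary: I expect the paper to take this route, presenting the proof as ``combine Theorem~\ref{thm:gamma-is-cube} with \cite[main result]{bluhm2022steering}'' plus a one-line Stirling estimate, rather than reproving the dilation bound.
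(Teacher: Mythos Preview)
Your proposal is correct, and your final prediction is exactly what the paper does: the proposition is stated in the discussion section without a proof, introduced only by the sentence ``For the matrix cube, we can summarize the results of \cite{bluhm2022steering}, which build on \cite{ben-tal2002tractable, helton2019dilations}, as follows,'' so the entire content is delegated to those references. Your sketched dilation argument (lower bound via \cite{helton2019dilations}, sharpness via anticommuting Clifford tuples, Stirling for the asymptotics) is the substance behind those citations and goes beyond what the paper itself spells out.
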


In particular, in the case of qubits ($d=2$), we find that $\tau_*(2) = 1/2$. This recovers a result from \cite{Bluhm2020GPT}, which has been obtained using a connection to $1$-summing norms of $\ell_2$ Banach spaces. We present our knowledge concerning $\Gamma(g,d)$ in Figure \ref{fig:phase-diagram}. The red region is the region in which $\tau_\ast(d) \not \in \mathrm{QC}_g$. Since we know that $\tau_\ast(d) \in \Gamma(g,d)$, it must thus hold that $\mathrm{QC}_g \subsetneq \Gamma(g,d)$. We see that we can recover all the results from Proposition \ref{prop:gamma-known-before} for $g=2$ and $g=3$, $d=2$.

\begin{figure}
    \centering
    \includegraphics{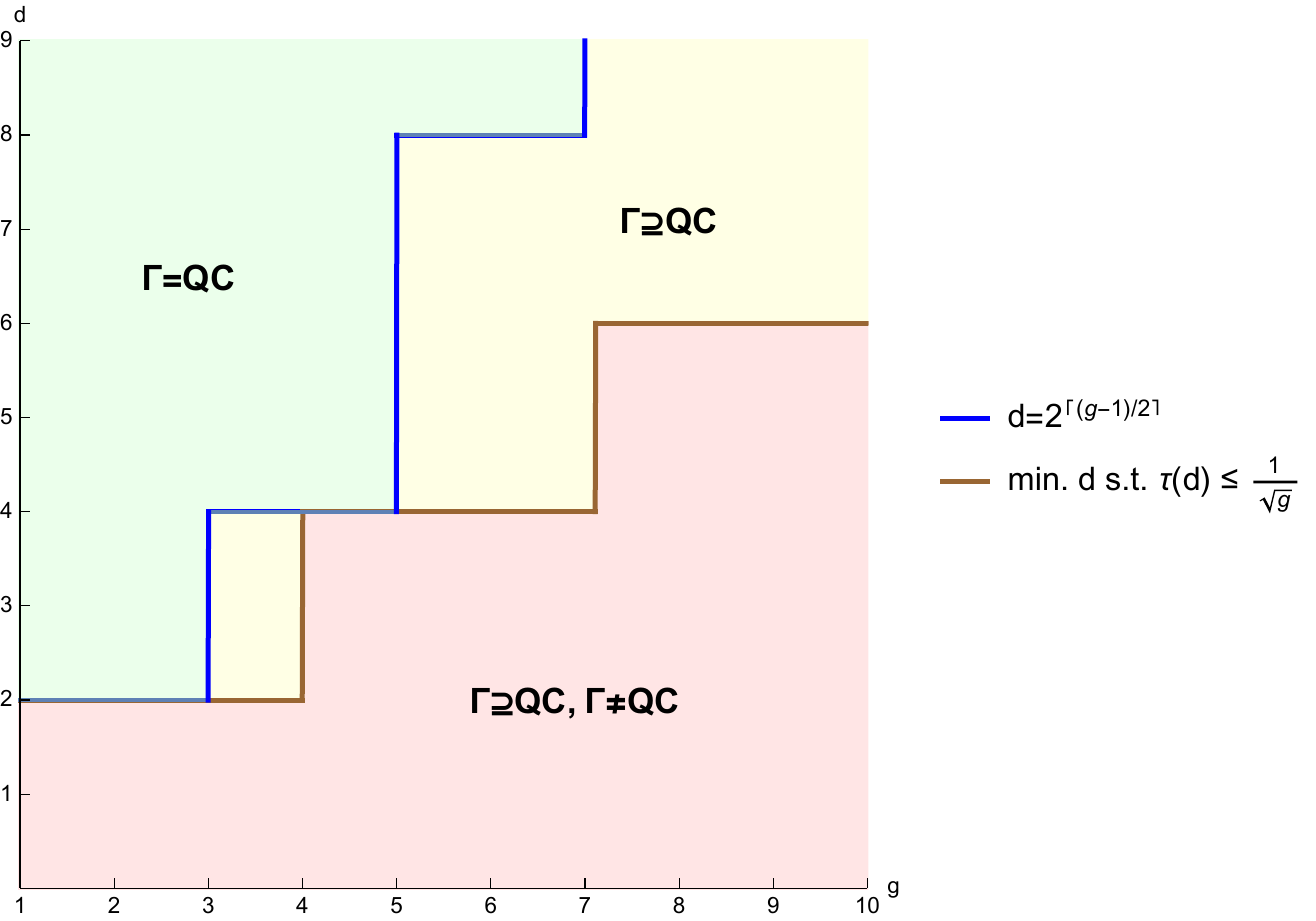}
    \caption{Bounds on the set $\Gamma(g,d)$. The set $\mathrm{QC}_g$ is defined as in Proposition \ref{prop:quarter-circle}. The blue line belongs to the green region in which $\Gamma(g,d) = \mathrm{QC}_g$, whereas the brown curve does not belong to the red region in which $\Gamma(g,d) \supsetneq \mathrm{QC}_g$}
    \label{fig:phase-diagram}
\end{figure}

\bigskip
 
To conclude, let us ask the question of determining compatibility regions and metric characterization of compatibility in the scenario where the measurements have more than two outcomes. This is a fundamental question of great importance in quantum theory, especially in the case of von Neumann measurements. In this paper we have considered tensor norms to describe dichotomic measurements. This approach cannot be extended to more general situations, because the set of measurements is no longer centrally symmetric and thus cannot be described as the unit ball of some norm. In our past work \cite{bluhm2020compatibility}, we have used the theory of free spectrahedra to circumvent this problem, but the corresponding mathematical machinery is still in its infancy and one needs to develop it further in order to obtain interesting results about the respective compatibility regions.

\bigskip

\textbf{Acknowledgements:}
A.B.~acknowledges financial support from the European Research Council (ERC Grant Agreement No. 81876) and VILLUM FONDEN via the QMATH Centre of Excellence (Grant No.10059). I.N.~was supported by the ANR project ``ESQuisses'' (grant number ANR-20-CE47-0014-01).

\bibliographystyle{alpha}
\bibliography{spectralit}
\end{document}